\def\doublecolumn{1} 
\begin{document}

\title{Secure Multi-Hop Relaying in Large-Scale Space-Air-Ground-Sea Integrated Networks}

\author{Hyeonsu Lyu,~\IEEEmembership{Student Member,~IEEE}, Hyeonho Noh,~\IEEEmembership{Member,~IEEE},
Hyun Jong Yang,~\IEEEmembership{Senior Member,~IEEE}, and Kaushik Chowdhury,~\IEEEmembership{Fellow,~IEEE}
\thanks{
    Hyeonsu Lyu is with the Department of Electrical Engineering, POSTECH, Korea (email: hslyu4@postech.ac.kr).
    Hyeonho Noh is with Department of Information \& Communication Engineering, Hanbat National University, Korea (email: hhnoh@hanbat.ac.kr). Hyun Jong Yang (co-corresponding author) is with the Department of Electrical and Computer Engineering, Seoul National University, Korea (email: hjyang@snu.ac.kr).
    Kaushik Chowdhury (co-corresponding author) is a Chandra Family Endowed Distinguished Professor in the Department of Electrical and Computer Engineering, University of Texas at Austin, USA (email: kaushik@utexas.edu).
    }
}

\makeatletter
\def\ps@IEEEtitlepagestyle{%
\def\@oddhead{%
    \parbox{\textwidth}{
    \small This paper has been submitted to IEEE for possible publication.\hfill {\scriptsize \thepage}\\
    Copyright may be transferred without notice, after which this version may no longer be accessible.}}
    \def\@evenhead{}%
    \def\@oddfoot{}%
    \def\@evenfoot{}%
}
\makeatother



\maketitle

\begin{abstract}
As a key enabler of borderless and ubiquitous connectivity, space–air–ground–sea integrated networks (SAGSINs) are expected to be a cornerstone of 6G wireless communications.
However, the multi-tiered and global-scale nature of SAGSINs amplifies the security vulnerabilities, particularly due to the hidden, passive eavesdroppers distributed across the network.
This paper introduces a joint optimization framework for multi-hop relaying in SAGSINs that maximizes the minimum user throughput while ensuring a minimum strictly positive secure connection (SPSC) probability.
We first derive a closed-form expression for the SPSC probability and incorporate this into a cross-layer optimization framework that jointly optimizes radio resources and relay routes.
Specifically, we propose an $\mathcal{O}(1)$ optimal frequency allocation and power splitting strategy—dividing power levels of data transmission and cooperative jamming.
We then introduce a Monte-Carlo relay routing algorithm that closely approaches the performance of the numerical upper-bound method.
We validate our framework on testbeds built with real-world datasets.
All source code and data for reproducing the numerical experiments will be open-sourced \hl{via GitHub}.
\end{abstract}
\begin{IEEEkeywords}
Secure connection probability, non-terrestrial network, multi-hop relay, Monte-Carlo relay routing, max-min throughput.
\end{IEEEkeywords}

\IEEEpeerreviewmaketitle

\section{Introduction}
\label{sec:Introduction}
 
\IEEEPARstart{T}he era of hyper-connectivity begins. 
SpaceX's reusable launch vehicles have paved the way for extending commercial satellite communications to a broader user base \cite{Kodheli21-SurvTut},
and initiatives such as Google's Project Loon \cite{Bellemare2020-nature} and Aerostar's Thunderhead \cite{Elamassie24-TVT} have established the technical foundations for high-altitude platforms (HAPs) networks.
Moreover, the advances in sea surface \cite{Li25-arxiv} and underwater base stations \cite{Zhang24-COMMAG}, other forms of borderless networks, signal a big paradigm shift: a global-scale integrated network \cite{Meng25-SurvTut, Guo22-SurvTut}.
NTNs break down national boundaries, lay the groundwork for an interplanetary internet, and provide communication infrastructure across all regions of the Earth. 
Ubiquitous connectivity in line with this trend has emerged as a primary objective for 6G wireless \cite{IMT2030}, driving active discussions on standards for non-terrestrial networks \cite{3gpp.23.700-10, 3gpp.26.501, 3gpp.38.821}.
The integration of these vertical, heterogeneous network layers stands to transcend the limitations of terrestrial networks.

To realize the merits of the space-air-ground(-sea) integrated networks (SAG(S)INs), various technical challenges have been addressed in both industry and academia \cite{Wang23-TWC, Han22-TCOM, Zhang24-IOTJ, Li23-CL, Wang23-GLOBECOM, kakati25-OJCOMS, Sun22-TCOM, Nguyen23-TAES}.
Consequently, numerous techniques have been proposed to combine the complementary strengths of heterogeneous platforms, 
achieving ubiquitous global coverage and strengthening link reliability \cite{Meng25-SurvTut}.
However, the wide communication coverage of SAGSIN and the participation of numerous heterogeneous nodes can intrinsically intensify the possibility of eavesdropper's (Eve's) intrusion \cite{Guo22-SurvTut, Dai22-IOTM}.
Safeguarding the integrity and confidentiality of SAGSIN is critical in scenarios where the secrecy requirements are exceptionally high, such as maritime energy transportation and military operations.
Moreover, Eves can achieve considerable SNR in SAGSINs as LEO and HAPs emit signals via line-of-sight channels over vast areas, exponentially expanding the attack surface.
Therefore, physical-layer security must be employed in SAGSIN to protect information of legitimate users.

Ensuring secrecy for network nodes in SAGSINs presents several technical challenges.
Users in the SAGSIN must traverse multiple base station nodes to access the core internet \cite{Xu24-WCOM}.
This indicates that secrecy must be maintained across multiple relay hops, each of which constitutes a potential vulnerability exploitable by adversaries \cite{Dong10-TSP}.
Numerous studies have explored secure communications in SAGSINs where HAPs serve as intermediate nodes to bridge satellites and ground terminals in two-hop relay systems \cite{Wang23-TWC, Han22-TCOM, Zhang24-IOTJ, Li23-CL}.
However, ensuring secure communications across multi-hop relays in SAGSINs still needs further investigation.

Delivering high throughput to users while ensuring security presents another significant challenge.
In SAGSINs, optimizing radio resources to maximize throughput under security constraints is inherently intractable as acquiring accurate channel state information (CSI) of hidden Eves is practically infeasible \cite{Su22-Sensors, Abdalla23-TVT}.
However, many studies focus on optimizing the relay itself for given CSI, while the joint optimization of relay and radio resource management (RRM) in SAGSINs remains an open problem.

These challenges motivate the following research question:
\begin{tcolorbox}[colframe=black, colback=white, height= \if 1\doublecolumn 1.1 \else 1.5 \fi cm, boxrule=0.4mm]
\begin{center}
    \vspace{-0.135cm}
    \if 1\doublecolumn 
    \textit{\textbf{How can high-throughput, secure SAGSIN relay be established without Eve's channel information? }}
    \else
    \textit{\textbf{How can high-throughput, secure SAGSIN relay be established \\ without Eve's channel information? }}
    \fi
\end{center}
\end{tcolorbox}
\noindent This research question has rarely been addressed within the context of secure SAGSIN research, as shown in Table~\ref{tab:related_works}.
Unlike prior works that only optimize secure relaying in NTNs, 
we introduce a joint optimization framework for multi-hop relaying that maximizes the minimum user throughput while ensuring a prescribed strictly positive secure connection (SPSC) probability against unknown or passive Eves.

\begin{table}[htb]
    \centering
    \caption{Summary of the related works in Sec.~\ref{Sec:Related Works}}
    \label{tab:related_works}
    \begin{tabularx}{\if 1\doublecolumn 1 \else .62\fi \columnwidth}{ccccccc}
        \toprule
              Ref.            & Archit. &  Eve.   & \hspace{-.2cm}   RA  \hspace{-.1cm}    &     PC  \hspace{-.1cm}  &  Jamming \hspace{-.3cm}   &   Relay    \\
        \cmidrule(l{2pt}r{2pt}){1-1} \cmidrule(l{2pt}r{2pt}){2-7}
        \arrayrulecolor{lightgray}
        \cite{Wang23-TWC}       &  SAGIN  & \Known  &  \Xmark  &   \Xmark   &   \Xmark   &  Two-hop   \\
        \cmidrule(l{2pt}r{2pt}){1-1} \cmidrule(l{2pt}r{2pt}){2-7}
        \cite{Han22-TCOM, Zhang24-IOTJ, Li23-CL}
                                &  SAGIN  & \Known  &  \Xmark  & \mycheck & \mycheck &  Two-hop   \\
        \cmidrule(l{2pt}r{2pt}){1-1} \cmidrule(l{2pt}r{2pt}){2-7}
        \cite{Wang23-GLOBECOM}  &  SAGIN  & \Known  &  \Xmark  & \mycheck &   \Xmark   &     -      \\
        \cmidrule(l{2pt}r{2pt}){1-1} \cmidrule(l{2pt}r{2pt}){2-7}
        \cite{kakati25-OJCOMS}  &  SAGIN  & \Known  &  \Xmark  & \mycheck &   \Xmark   &  Two-hop   \\
        \cmidrule(l{2pt}r{2pt}){1-1} \cmidrule(l{2pt}r{2pt}){2-7}
        \cite{Eiza23-WCNC, Luo24-TNSM}
                                &  SAGIN  & Unknown &  \Xmark  &   \Xmark   &   \Xmark   & Multi-hop  \\
        \cmidrule(l{2pt}r{2pt}){1-1} \cmidrule(l{2pt}r{2pt}){2-7}          
        \cite{Sbeiti16-TWC}                                    
                                & UAV net.& Unknown &  \Xmark  &   \Xmark   &   \Xmark   & Multi-hop  \\
        \cmidrule(l{2pt}r{2pt}){1-1} \cmidrule(l{2pt}r{2pt}){2-7}          
        \cite{Sun22-TCOM}       & UAV net.& \Known  &  \Xmark  &   \Xmark   & \mycheck & Multi-hop \\
        \cmidrule(l{2pt}r{2pt}){1-1} \cmidrule(l{2pt}r{2pt}){2-7}          
        \cite{Nguyen23-TAES}    &  SGIN   & \Known  &  \Xmark  &   \Xmark   & \mycheck & Multi-hop  \\
        \cmidrule(l{2pt}r{2pt}){1-1} \cmidrule(l{2pt}r{2pt}){2-7}          
        \cite{Guo18-Access}     &  SGIN   & \Known  &  \Xmark  &   \Xmark   &   \Xmark   & Multi-hop  \\
        \cmidrule[0.75pt](l{2pt}r{2pt}){1-7}
        \textbf{Ours} & SAGSIN & Unknown & \mycheck &  \mycheck & \mycheck  &  Multi-hop \\
        \arrayrulecolor{black}
        \bottomrule
    \end{tabularx}
    \vspace{-.2cm}
    \\
    \begin{flushleft}
    * RA: Frequency resource allocation, PC: Power control, \\
    ~~S(A)GIN: Space(-air)-ground integrated network.
    \end{flushleft}
    \vspace{-.1cm}
\end{table}

The contributions of this work are summarized as follows:
\begin{itemize}
    \item \textit{\textbf{Cross-layer secure relaying framework}}:
    We propose a cross-layer framework for secure multi-hop relaying in large-scale SAGSINs that optimizes the max-min throughput \textbf{without Eve’s channel information}, 
    \hl{bridging two separate research fields---the secrecy outage probability and radio resource scheduling---with a new form of system model for secure communications that has been previously pioneered.}
    This formulation facilitates the optimization process compared to conventional max-min secrecy rate approaches, while achieving comparable performance to the max-min secrecy problem (Fig.~\ref{fig:mmsr_spsc}).
    \item \textit{\textbf{Analysis for SPSC constraint}}:
    \hl{
    We derive a closed-form approximation for the SPSC probability using stochastic geometry, 
    showing that it can be characterized as a function of Eve density, jamming power, and link distance.
    This generalizes the prior works \cite{Yao16-TCOM, Yao18-CL, Yao18-TWC}, which obtained closed-form expressions for the SPSC probability in the absence of jamming. 
    Furthermore, we rigorously analyze the approximation gap in Appendix~\ref{Supple:Derivation of the Bound Gap in the SPSC Approximation} and propose a calibration method to mitigate this gap in Figs.~\ref{fig:spsc_density}-\ref{fig:jam_vs_distance}, which has not been addressed in earlier studies.
    }
    \item \textit{\textbf{Efficient solution design}}: 
    We develop globally optimal closed-form solutions for radio resource management with $\mathcal{O}(1)$ computational complexity (\textbf{Theorem 1}), achieving the optimal frequency allocation, transmission, and jamming power control for a given routing topology.  
    We then propose a Monte-Carlo relay routing algorithm (Alg.~1) with $\mathcal{O}(KN\log N)$ computational complexity for $N$ nodes.  
    Our method consistently achieves a max-min throughput within approximately 5\% of the upper bound under various secure SAGSIN simulation settings (Figs.~\ref{fig:mmf_spsc}–\ref{fig:mmf_power}).
    \item \textit{\textbf{Real-world SAGSIN testbed}}: 
    We validate the proposed framework and solver using real-world data of terrestrial, HAPs, LEO satellites, and vessel data.
    To the best of our knowledge, this is \textbf{the first real-world SAGSIN testbed \hl{built upon a real-world dataset}} including HAPs base stations.  
    The demonstrations on the testbeds not only demonstrate the practicality and superiority of the RRM and routing algorithm (Fig.~\ref{fig:real_demo}), but also illustrate that the framework can immediately adapt to changes in the security environment (Fig.~\ref{fig:real_demo_us}).
\end{itemize}

\section{Related Works}
\label{Sec:Related Works}
Recent studies have explored diverse approaches to secure communication in SAGSIN.
We survey both general physical-layer security strategies and secure routing approaches in the broader context of NTNs, including LEO and HAP networks.
These related works are summarized in Table.~\ref{tab:related_works}.

\subsection{Secure Communications in SAGSINs} 
\cite{Wang23-TWC} proposes an aerial bridge scheme using UAVs to form secure tunnels and reduce the eavesdropping footprint.
\cite{Han22-TCOM} enhances secrecy rate via a joint UAV deployment and power allocation using successive convex approximation to decouple complex multi-layer constraints,
and \cite{Zhang24-IOTJ} further enhances security by optimizing UAV trajectory and power with a hybrid FSO/RF scheme that mitigates RF vulnerabilities through optical satellite links.
Similarly, \cite{Li23-CL} combines UAV-assisted non-orthogonal multiple access with cooperative jamming to counter both internal and external eavesdropping threats in SAGINs.
\cite{Wang23-GLOBECOM} proposes a label‐free deep learning framework that dynamically selects access strategies to maximize sum secrecy rate.
\cite{kakati25-OJCOMS} optimizes HAPs trajectories and beamforming by employing a generative AI-based DRL framework for proactive adaptation and improved secrecy energy efficiency under dynamic channel changes.

These studies have contributed to improving various secrecy utilities in small-scale SAGSINs with less than two-hops under the perfect knowledge of Eves.
However, it is essential to investigate wide-area multi-hop communications under imperfect knowledge of Eve’s channel, as SAGSINs are fundamentally intended to enable ubiquitous global connectivity.

\subsection{Secure Routing in NTNs}
\cite{Eiza23-WCNC} presents a hybrid SDN architecture with hierarchical control to enable secure and QoS-aware routing in SAGINs. 
\cite{Luo24-TNSM} design an artificial bee colony-based routing protocol for UAV networks, significantly improving routing performance and consensus efficiency in dynamic environments.
Meanwhile, \cite{Sbeiti16-TWC} proposes a position-aware routing protocol for airborne mesh networks to efficiently mitigate threats from wormhole and blackhole attacks.
Furthermore, \cite{Sun22-TCOM} propose a secure, energy-efficient UAV relay framework using collaborative beamforming to jointly enhance secrecy and reduce propulsion energy consumption.
Additionally, \cite{Nguyen23-TAES} analyzes security–reliability trade-offs in satellite–terrestrial relay networks with imperfect CSI by introducing a friendly jammer, deriving closed-form expressions for outage and intercept probabilities.
\cite{Guo18-Access} jointly designs secure relay selection and user scheduling for hybrid satellite–terrestrial relay networks, deriving closed-form expressions for average secrecy capacity.

These studies have enabled secure communication over broader geographic areas through multi-hop relaying. 
However, these studies may not achieve optimal network utility as they adopted physical channels for routing, or do not consider the trade-off between transmission and jamming power.
\hl{Moreover, pioneering studies on secrecy outage-aware routing with jammers have been conducted in \cite{Xu21-TWC, Xu21-IoTJ, Yao18-TWC}. However, these works have neither investigated the problem within the context of SAGSIN nor jointly considered physical-layer radio resources.}
Therefore, a cross-layer design to jointly optimize RRM and relay routing is required in multi-hop SAGSINs.

\section{System model}
\label{sec:System model}
\begin{figure*}[htb]
    \centering
    \includegraphics[width=\textwidth]{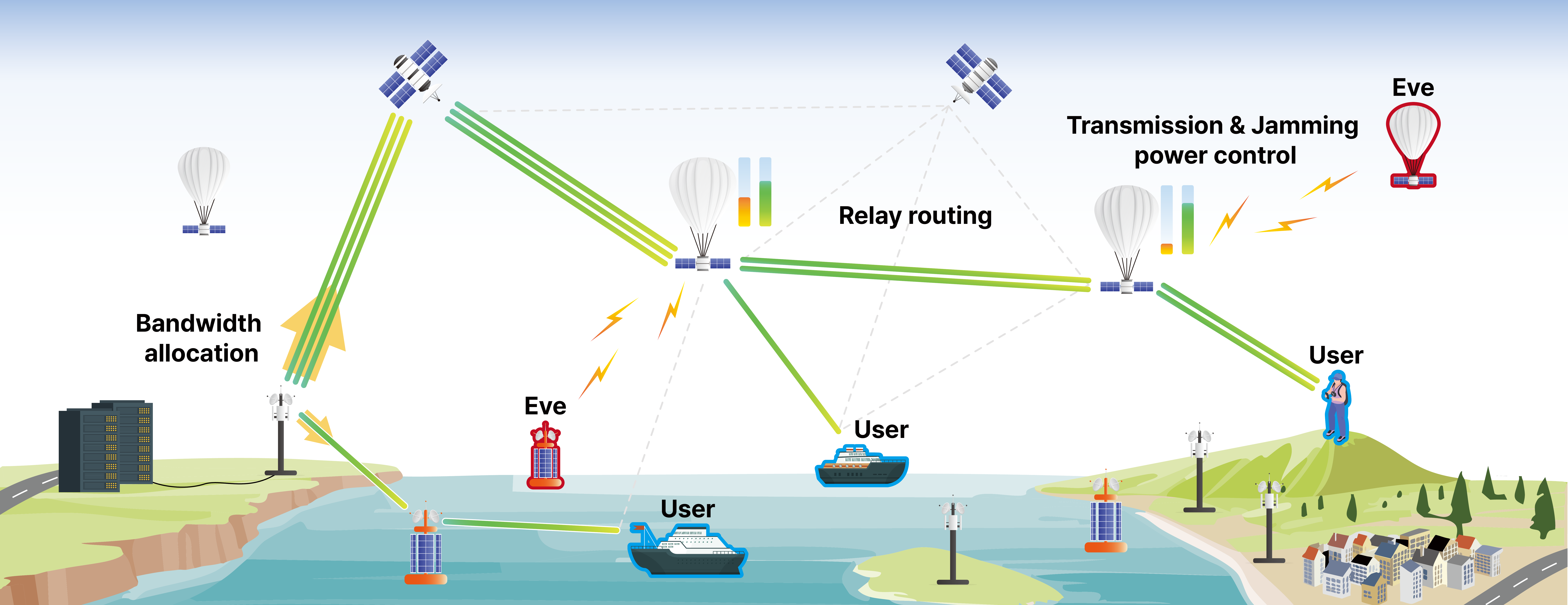}
    \caption{Illustration of the target scenario in secure multi‐hop relaying for SAGSINs. 
    The diagram shows a source node relaying data to multiple users via intermediate nodes. 
    At each relay hop, the nodes transmit both data and predefined jamming signals to degrade the Eves’ signal quality, whose locations are modeled as uniformly distributed across the space, air, ground, and sea layer.
    Each node dynamically allocates bandwidth between data transmission and jamming to maximize the minimum user throughput.
    }
    \label{fig:system_model}
\end{figure*}

\subsection{Scenario Description}

This paper aims to construct a secure relaying route in the SAGSIN, with consideration of radio resources, keeping the SPSC probability above a certain threshold in the presence of the hidden Eves.
Fig.~\ref{fig:system_model} illustrates the target scenario with three control variables.
The SAGSIN relay adopts the decode-and-forward scheme \cite{Cover79-TIT} and operates in a half-duplex manner, where the throughput is conversely proportional to the number of hops \cite{Gupta00-TIT}.
The nodes transmit predefined jamming signals with data signal to degrade Eve's wiretap ability by reducing the SINR of the wiretap channel \cite{Dong10-TSP}.
Meanwhile, legitimate nodes recover the original data by cancelling the jamming signals with predefined patterns \cite{Goel08-TWC}.

The index set of Eves is defined as $\mathcal{M}$.
We assume that Eves are present at each network layer to wiretap legitimate communications.
The Eves are assumed to be non-colluding as the SAGSIN's extensive geographic dispersion of nodes makes cooperative wiretap channels logistically unfeasible and the limitations in memory and compute of mobile Eves hinder storing raw I/Q signals.
We assume that \textit{the location and channel model of Eves are unknown}, where Eves in each network layer are independently distributed following the homogeneous Poisson point processes (HPPP) with density $\lambda_1, \lambda_2, \lambda_3$, and $\lambda_4$ for the space, air, ground, and sea network layers, respectively.

We define the index set of nodes as $\mathcal{I}=\{0, 1,...,I\}$.
A total of $U$ users are served through the multi-hop relays, where the user index set is denoted as $\mathcal{U}=\{1,...,U\}$.
The binary variable $x_{(i,j)}\in\{0,1\}$ indicates whether there exists a hop between the $i$-th and $j$-th nodes.
The location of the $i$-th node and Eve $e$ are denoted as $\bm{p}_i$ and $\bm{p}_e$, respectively.
\hl{All notations and variables are summarized in Table~\ref{tab:Summary of Notations} on Appendix~\ref{Supple:Notations and Variables}.}

\subsection{Propagation Model and Secrecy Metrics}
\label{subsec:Propagation Model and Secrecy Metrics}
The instantaneous received SNR of the legitimate hop between node $i$ and $j$, if any, can be given as
\begin{align}
    \mathrm{SNR}^{\mathsf{s}}_{(i,j)} = \frac{\rho_iG_{(i,j)}|h^{\mathsf{s}}_{(i,j)}|^2}{n_0(d^{\mathsf{s}}_{(i,j)})^{\alpha_{i}}}
\end{align}
where \hl{$G_{(i,j)}$ is the antenna gain}, $d^\mathsf{s}_{(i,j)}=\|\bm{p}_i-\bm{p}_j\|_2$ is a distance between; $\rho_i$ is the power density, $h_{(i,j)}$ is small-scale fading, \hl{and $n_0$ is the noise spectral density}. 
The path loss exponent $\alpha_i$ is determined based on the network layer in which node $i$ is located.
As nodes can decode and cancel the predefined jamming signal, the SINR of legitimate links are not affected by jamming.
Meanwhile, the SNR of wiretap link for node $i$ and Eve $k$ can be defined as
\begin{align}
    \hlmath{\mathrm{SNR}^{\mathsf{e}}_{i, e} = \frac
    {\rho_iG_{(i,j)}|h^{\mathsf{e}}_{(i, e)}|^2(d^{\mathsf{e}}_{(i, e)})^{-\alpha_i}}
    {\sigma_iG_{(i,j)}|h^{\mathsf{e}}_{(i, e)}|^2(d^{\mathsf{e}}_{(i, e)})^{-\alpha_i}+n_0}}
\end{align}
where $\sigma_i$ is \hl{the power spectral density} of the transmitted jamming signal \cite{LV20-TCOM, Chen22-TVT, Mao24-WCL}.
\hl{We consider that Eves have the same receiving capability as the legitimate nodes, assuming the same gain for Eves.}
\hl{For all links, we consider that $|h^{\mathsf{s}}_{(i,j)}|^2$ and $|h^{\mathsf{e}}_{i,e}|^2$ are Rayleigh fading channel, following exponential distribution $\mathsf{Exp}(1)$.
Adopting the Rayleigh model allows us to analyze the secrecy performance under worst-case conditions, 
though the LoS nature of SAGSIN channels may be more accurately captured by Rician or Nakagami-$m$ fading. 
In this sense, the Rayleigh assumption provides a meaningful lower bound on the SPSC probability, as further elaborated in Appendix~\ref{supple:SPSC Probability with Various Fading Effects}.
}
The entire transmission power density of node $i$ is limited to \hl{$P_i^{\max}$}; and the minimum transmission power density is defined as \hl{$P_i^{\min}$}, denoted as 
\begin{align}
    \hlmath{
    \rho_i + \sigma_i \leq P_i^{\max},~\rho_i \geq P_i^{\min}~ \forall i\in\mathcal{I}
    }
\end{align}

As the multiple non-colluding Eves can wiretap the legitimate transmission, the secrecy capacity of the link between nodes $i$ and $j$ is defined as
\begin{align}
    \hspace{-.2cm}
    C_{(i,j)} = \Big[\log_2\big(1+\SNR{s}_{(i,j)}\big)-\log_2\big(1+\max_{e\in\mathcal{M}}\SNR{e}_{i,e}\big)\Big]^+.
    \label{eq:secrecy capacity}
\end{align}
Then, as introduced in \cite{Jameel19-SurvTut, Kong18-TVT, Tolossa18-systemsJ, Bhargav16-TCOM}, the SPSC probability between node $i$ and $j$ is defined as
\begin{align}
    \mathbb{P}_{(i,j)}=\mathbb{P}(C_{(i,j)}>0).
    \label{eq:SPSC_0}
\end{align}
\textbf{The closed-form derivation of \eqref{eq:SPSC_0} will be presented in \eqref{eq:SPSC_final_approx_2}.}

\subsection{Multi-hop Relay Model: Graph-Theoretic Viewpoint}

\paragraph{Graph topology} The routing problem is generally considered as finding a sub-graph of directed graph $\mathcal{G}_{\mathrm{all}}=(\mathcal{N}, \mathcal{E})$ where $\mathcal{N}\subset \mathcal{I}\cup\mathcal{U}$ and $\mathcal{E}\subset \{(i,j)|i,j\in\mathcal{N}\}$.
This represents that nodes and users are the graph nodes and the relay hops are the graph edges \cite{Pagin22-TWC, Yin22-TCOM}.

We assume a spanning tree (ST) topology, which refers to an acyclic graph where all nodes have a single path to the root node \cite{3gpp.38.401}.
This assumption can be relaxed into a directed acyclic graph (DAG) topology, another topology discussed in the 3GPP standards \cite{3gpp.38.340}, where nodes can be backhauled by multiple parent nodes.
However, the DAG topology requires a sophisticated channel and data management to synchronize and integrate multiple backhaul links, which is not suitable for SAGSINs where node distances can vary significantly.\footnote{Nonetheless, the system model can also be applied to the relay with directed acyclic graph topology by changing the graph topology constraints.}

We define $\mathcal{E}_u$ to denote a relay from node $0$ to user $u$, and
\begin{align}
    \hspace{-.2cm}
    x_{(i,j)} = \begin{cases}
    1, \text{ if } (i,j)\in\mathcal{E}\\
    0, \text{ otherwise}
    \end{cases}
    \hspace{-0.4cm},~
    x_{(i,j),u} = \begin{cases}
    1, \text{ if } (i,j)\in\mathcal{E}_u \\
    0, \text{ otherwise}
    \end{cases}
    \hspace{-0.4cm}
\end{align}
to indicate whether edge $(i,j)$ belongs to the edge set of graph $\mathcal{G}$ and edge $(i,j)$ belongs to $\mathcal{E}_u$.\footnote{The system model considers cross-layer optimization of both network-layer routing and physical-layer resource, so both $x_{(i,j)}$ and $x_{(i,j),u}$ are required to consider the routing and radio resource optimization, respectively.}
There must be no edges entering to the root node $0$ and every other node should have exactly one entering edge, which can be denoted as
\begin{align}
    \sum_{i\in \mathcal{N}} x_{(i,0)} = 0,~
    \sum_{j\in \mathcal{N}} x_{(j,i)} = 1, \forall i \in \mathcal{N}\setminus \{0\}
    \label{eq:constraint_ST_topology_1}
\end{align}

We introduce the \textit{cut-based} constraint, which indicates that graph $\mathcal{G}$ has ST topology. 
For any non-empty subset $S \subset \mathcal{N} \setminus \{0\}$, there must be at least one selected edge $(i,j)$ with $i \in \mathcal{N} \setminus S$ and $j \in S$, which is denoted as
\begin{align}
    \sum_{i\notin S,\; j\in S} x_{(i,j)} \ge 1,
    \quad 
    \forall\, S \subset \mathcal{N}\setminus\{0\},\; S \neq \emptyset.
    \label{eq:constraint_ST_topology_3}
\end{align}
This constraint prevents any group of nodes from being disconnected from the root node.
Consequently, each node is guaranteed a directed path originating from the root, while satisfying the ST topology.

\paragraph{Relay throughput} The nodes dynamically allocate bandwidth and transmit power on the backhaul link to maximize the network utility.
We define $\beta_{(i,j),u}$ to represent the allocated bandwidth of user $u$ at link $(i,j)$; and the spectral efficiency $\gamma_{(i,j)}$ between node $i$ and $j$ as\footnote{The ergodic spectral efficiency $\mathbb{E}_{|h^\mathsf{s}_{(i,j)}|^2}\big[\log_2(1+\SNR{s}_{(i,j)})\big]$ is approximated as $\log_2\big(1+{\rho_i}/[{n_0(d^\mathsf{s}_{(i,j)})^{\alpha_i}]}\big)$ 
by using the Jensen's inequality. Appendix~\ref{Supple:Approximation of Ergodic Spectral Efficiency} provides the analysis on the estimation error.}
\begin{align}
    \gamma_{(i,j)} = \log_2\big(1+\SNR{s}_{(i,j)}\big).
    \label{eq:spectral_efficiency}
\end{align}
The relay throughput of user $u$ is defined as the minimum link capacity divided by the number of hops, denoted as
\begin{align}
    \eta_u = \min_{(i,j)\in\mathcal{E}_u}\frac{\beta_{(i,j),u}\gamma_{(i,j)}}{h_u},
\end{align}
where $h_u = \sum_{(i,j)\in\mathcal{E}} x_{(i,j),u}$.
Then, the minimum throughput of the system is defined as $\min_{u\in\mathcal{U}} \eta_u$.

\section{Problem Formulation and Proposed Solution} \label{sec:problem_solution}
We aim to maximize the minimum throughput across the entire system, ensuring fair service for users accessing the core internet via multiple hops.
\hl{
The problem of guaranteeing a SPSC probability in multi-hop relay terrestrial networks has been studied in \cite{Yao18-CL, Yao18-TWC, Thapar20-TVT}.
Similarly, the relay routing problem that ensures users’ QoS while achieving the optimal secure connection probability has also been investigated \cite{Xu16-ICC, Xu17-WCNC}.
However, there has been no attempt to optimize the max-min throughput of the multi-hop relaying system, guaranteeing the SPSC probability under cooperative jamming in SAGSINs.
}

Let $\mathbf{B}=\{\beta_{(i,j),u}:i,j\in \mathcal{N}, u\in\mathcal{U}\}$, $\mathbf{P}=\{\rho_i:i\in\mathcal{N}\}$, and $\mathbf{J}=\{\sigma_i:i\in\mathcal{N}\}$.
The max-min throughput problem is formulated as
\begin{subequations}
    \begin{alignat}{3}
        & \bf{\mathdutchcal{P}1}: && \max_{\mathcal{G}, \mathbf{B}, \mathbf{P}, \mathbf{J}} && 
        \min_{
            \substack{u\in\mathcal{U}, \\ (i,j)\in\mathcal{E}_u}
        }
        \frac{\beta_{(i,j),u}\gamma_{(i,j)}}{h_u}
        \label{P1:objective} \\
        &  && \mathrm{~~~~s.t.~} && 
        \mathbb{P}_{(i,j)} \geq x_{(i,j)} \tau,
        \label{P1:SPSC_probability_threshold}
        \\
        & && && \eqref{eq:constraint_ST_topology_1},~ \eqref{eq:constraint_ST_topology_3}
        \label{P1:ST_topology}\\
        & && && \sum_{j\in\mathcal{N}}\sum_{u\in\mathcal{U}} \beta_{(i,j),u} \leq B, 
        \label{P1:resource_allocation}\\
        & && && \rho_i+\sigma_i \leq \hlmath{P_i^{\max}}, 
        \label{P1:power_jamming_sum}\\[-0.1cm]
        & && && \beta_{(i,j),u} \geq 0, \rho_i \geq \hlmath{P_i^{\min}}, \sigma_i \geq 0.
        \label{P1:positive_control_variable}
    \end{alignat}
\end{subequations}
Constraint \eqref{P1:SPSC_probability_threshold} represents the threshold of the secrecy connection probability; \eqref{P1:ST_topology} is related to the graph topology; \eqref{P1:resource_allocation} is the frequency resource constraint; \eqref{P1:power_jamming_sum} pertains to the transmission and jamming power constraint; and \eqref{P1:positive_control_variable} ensures the positivity of resource variables.

There are two main challenges in Problem $\Problem{1}$.
First, The formulated problem is a mixed-integer non-convex problem due to the ST topology assumption and max-min throughput objective.
Specifically, the tight coupling between graph and radio resource variables in the objective makes it challenging to search the optimal solution \cite{lyu2025-TWC}. 
Finding the exact form of the SPSC probability is another challenge.
Numerical computation of \eqref{P1:SPSC_probability_threshold} may result in significant computational overhead and restrict finding feasible solutions under tight compute budgets.
This paper tackles the challenges above by separating the graph optimization from RRM; and by deriving the exact form of the SPSC probability.

\subsection{Closed-form Derivation of a Single-Hop SPSC Probability}
\label{subsec:Closed-form Derivation of a Single-Hop SPSC Probability}
For a hop between node $i$ and $j$, the SPSC probability is defined from \eqref{eq:secrecy capacity} as 
\begin{align}
    \mathbb{P}_{(i,j)} &=
    \mathbb{P}
    \left(
        \log_2\Big(
            \frac{1+\SNR{s}_{(i,j)}}{1+\max_{e\in\mathcal{M}}\SNR{e}_{i,e}}
        \Big)>0
    \right) \\
    &=\mathbb{P}\big(\SNR{s}_{(i,j)} > \max_{e\in\mathcal{M}}\SNR{e}_{i,e}\big)
    \label{eq:SPSC_2} \\
    &=\mathbb{E}_{|h|^2, \mathcal{M}}
    \Big[
        \prod_{e\in\mathcal{M}}\mathbb{P}\big(\SNR{s}_{(i,j)} > \SNR{e}_{i,e}\big)
    \Big].
    \label{eq:SPSC_3}
\end{align}
Equation \eqref{eq:SPSC_3} is derived from the property of HPPP and \hl{$\mathbb{P}(C >\max \{\gamma_1,...,\gamma_n\} )=\mathcal{P}(C>\gamma_1, ..., C>\gamma_n)$.}

We define $\lambda_i\in\{\lambda_1, \lambda_2, \lambda_3, \lambda_4\}$ as the Eve density of the network layer that node $i$ belongs to.
The probability generating functional of HPPP \cite{Yao16-TCOM} results in $\mathbb{E}_{\mathcal{M}}[\prod_{e\in\mathcal{M}}f(\bm{p}_e)]=\exp \big(-\lambda_i\int_{\mathbb{R}^2}1-f(\bm{p}_e) d\bm{p}_e\big)$.
Then, $\mathbb{P}_{(i,j)}$ becomes
\hlframe{
\begin{align}
    &\!\!\mathbb{P}_{(i,j)} \hspace{-0.05cm}= \hspace{-0.05cm}\mathbb{E}_{|h|^2}\!
    \bigg[\!
        \exp\!\Big[\!
            -\!\lambda_i\hspace{-0.1cm}\int_{\mathbb{R}^2}\hspace{-0.15cm}\mathbb{P}\big(\SNR{s}_{(i,j)}<\SNR{e}_{(i,e)}\big) ~d\bm{p}_e
        \!\Big]
    \!\bigg] 
    \hspace{-.25cm} \nonumber
    \label{eq:E1_derivation_2} \\[-.1cm]
    &
    \!\!\! = \hspace{-0.05cm}\mathbb{E}_{|h|^2}\!
    \bigg[\!
        \exp\!\Big[\!
    \\[-.1cm]
    &
    \!\! -\!\lambda_i\hspace{-0.1cm}\int_{\mathbb{R}^2}\hspace{-0.15cm}\mathbb{P}
                \bigg(
                    \underbrace{
                    \frac{
                        |h^\mathsf{s}_{(i,j)}|^2 (d^\mathsf{e}_{(i,e)})^{\alpha_i} n_0 
                    }{
                        (d^\mathsf{s}_{(i,j)})^{\alpha_i} n_0 - \sigma_i G_{(i,j)}  |h^\mathsf{s}_{(i,j)}|^2
                    }
                    }_{\displaystyle \textcolor{RoyalBlue}{\textbf{(a)}}}
                    <  |h^\mathsf{e}_{(i,e)}|^2
                \bigg) d\bm{p}_e
        \!\Big]
    \!\bigg].
    \nonumber
\end{align}
From $|h^\mathsf{e}_{i,e}| \sim \mathsf{Exp}(1)$, \eqref{eq:E1_derivation_2} simplifies to
\begin{align}
    \label{eq:E1_derivation_3}
    \mathbb{P}_{(i,j)} 
    = \mathbb{E}_{|h|^2}\!
    \bigg[\!
        \exp\!\Big[\!
            -\lambda_i\!
            \underbrace{
            \int_{\mathbb{R}^2}\!\!
            \exp\big(
                - \textcolor{RoyalBlue}{\textbf{(a)}}
            \big) ~d\bm{p}_e
            }_{\displaystyle \textcolor{RoyalBlue}{\textbf{(b)}}}
        \!\Big]
    \!\bigg].
\end{align}
Direct integral of \eqref{eq:E1_derivation_3} results in
\begin{align}
\hspace{-0.2cm}\textcolor{RoyalBlue}{\textbf{(b)}} = \frac{2\pi}{\alpha_i}\Gamma(\frac{2}{\alpha_i})(d^\mathsf{s}_{(i,j)})^{2}
        \bigg(
            \frac{
                |h^\mathsf{s}_{(i,j)}|^2
            }{
                1  - \frac{\sigma_i G_{(i,j)}}{(d^\mathsf{s}_{(i,j)})^{\alpha_i} n_0} |h^\mathsf{s}_{(i,j)}|^2
            }
        \bigg)^{-\frac{2}{\alpha_i}}\hspace{-0.2cm}.
\end{align}
Using the Jensen's inequality, this can be furthur approximated as
\begin{align}
    \label{eq:SPSC_final_approx_2}
    & 
    \tilde{\mathbb{P}}_{(i,j)} =
    \mathbb{E}_{|h|^2}
    \Big[
        \exp\big[
            -\lambda_i \textcolor{RoyalBlue}{\textbf{(b)}}
        \big]
    \Big] =
    \\ &
    \exp
    \bigg[
        -\kappa_i
        \Big[
        \Gamma \big(1-\frac{2}{\alpha_i} \big) 
        -
        \frac{2 \sigma_i G_{(i,j)}}
                 {\alpha_i (d^\mathsf{s}_{(i,j)})^{\alpha_i} n_0}
        \Gamma \big(2-\frac{2}{\alpha_i} \big)
        \Big]
        (d^\mathsf{s}_{(i,j)})^2
    \bigg]
    \nonumber
\end{align}
for $\kappa_i=\lambda_i \frac{2\pi}{\alpha_i}\Gamma(\frac{2}{\alpha_i})$.
Equation \eqref{eq:SPSC_final_approx_2} diverges when $\alpha_i=2$, which implies that mathematically guaranteed secure connection in the free-space cannot be achieved.
When $\sigma_i=0$, \eqref{eq:SPSC_final_approx_2} reduces to closed-form expression from a previous study \cite{Yao16-TCOM}, which calculates a SPSC probability without considering jamming.
This confirms that the new derivation is both valid and extensible.

Then, the approximation gap is bounded as
\begin{align}
    \mathbb{P}_{(i,j)}-\tilde{\mathbb{P}}_{(i,j)} \leq \frac{\lambda_i^2\pi^3}{2C}
    \Big(\ln R - \frac{\pi}{4}\Big)
    \label{eq:approximation gap}
\end{align}
for $C=    \big((d^\mathsf{s}_{(i,j)})^{\alpha_i} n_0 - \sigma_i G_{(i,j)}  |h^\mathsf{s}_{(i,j)}|^2 \big)^{-1}
$ \vspace{0.05cm}
and sufficiently large maximum eavesdropping range $R$.

A detailed derivation of \eqref{eq:SPSC_final_approx_2} is presented in Appendix~\ref{Supple:Derivation of the SPSC Probability}, 
and the analysis of the approximation gap \eqref{eq:approximation gap} is provided in Appendix~\ref{Supple:Derivation of the Bound Gap in the SPSC Approximation}.
}

\subsection{Optimal Frequency Resource Allocation}
We first look into how the optimal $\mathbf{B}$ is obtained for given $\mathcal{G}, \mathbf{P}, \mathbf{J}$.
\hl{Adopting auxiliary variable $\eta$, $\Problem{1}$ with fixed $\mathcal{G}, \mathbf{P}, \mathbf{J}$ can be considered as}
\begin{subequations}
    \begin{alignat}{3}
        & \Problem{2}: && \min_{\mathbf{B}, \eta} && 
        \frac{1}{\eta}
        \\
        &  && \mathrm{~~s.t.~} && 
        \beta_{(i,j),u}\gamma_{(i,j)} h_u^{-1} \geq x_{(i,j),u} \eta 
        \label{p2:constarint_1}
        \\
        & && && \sum_{j\in\mathcal{N}}\sum_{u\in\mathcal{U}} \beta_{(i,j),u} \leq B.
        \label{p2:constarint_2}
    \end{alignat}
\end{subequations}
The Lagrangian function of $\Problem{2}$ is
\begin{align}
    \mathcal{L}(\mathbf{B}, \eta, \bm{\lambda}, \mu) =
    \frac{1}{\eta} &+ 
    \sum_{\lambda_{(i,j),u}\in\bm{\lambda}} \lambda_{(i,j),u}\Big(
    x_{(i,j),u} \eta -
    \frac
        {\beta_{(i,j),u}\gamma_{(i,j)}}
        {h_u} 
    \Big)
    \nonumber \\
    &+ \mu \big(\sum_{j\in\mathcal{N}}\sum_{u\in\mathcal{U}} \beta_{(i,j),u}-B \big)
    \label{eq:Lagrange_beta}
\end{align}
for $\bm{\lambda}=\{\lambda_{(i,j),u}:i,j\in \mathcal{N}, u\in\mathcal{U}\}$.
The first-order optimality condition on \eqref{eq:Lagrange_beta}, $\delta\mathcal{L}/\delta\eta=0, ~ \delta\mathcal{L}/\delta\beta_{(i,j),u}=0$, gives
\begin{align}
    \eta = \big(\sum \lambda_{(i,j),u} x_{(i,j),u} \big)^{-\frac{1}{2}},~\lambda_{(i,j),u} = 
    \frac
        {\mu h_u} 
        {\gamma_{(i,j)}}.
    \label{eq:Lagrange_beta_2}
\end{align}
The optimal resource allocation $\beta^*_{(i,j),u}$ occurs when all frequency resources are fully utilized, which corresponds to the equality condition of \eqref{p2:constarint_2}. 
Otherwise, we have $\mu=\lambda_{(i,j),u}=0$ and $\eta$ in \eqref{eq:Lagrange_beta_2} is not defined according to complementary slackness of the KKT conditions.
Similarly, the equality condition of \eqref{p2:constarint_1} should be satisfied as well.

This can be expressed mathematically as 
\begin{align}
    x_{(i,j),u} \eta -
    \frac
        {\beta^*_{(i,j),u}\gamma_{(i,j)}}
        {h_u} = 0, ~
    \sum_{j\in\mathcal{N}}\sum_{u\in\mathcal{U}} 
    \beta^*_{(i,j),u}=B.
\end{align}
Solving the above equations with respect to $\beta^*_{(i,j),u}$ results in
\begin{align}
    \beta^*_{(i,j),u} =
        B\frac
        {x_{(i,j),u} h_u}
        {\gamma_{(i,j)}}
        \bigg(
            \sum_{j\in\mathcal{N}}\sum_{u\in\mathcal{U}}
            \frac
            {x_{(i,j),u} h_u}
            {\gamma_{(i,j)}}
        \bigg)^{-1}.
    \label{eq:optimal_beta}
\end{align}

\subsection{Optimal Power Allocation}
For fixed $\mathcal{E}$, we can reformulate $\Problem{1}$ by plugging in \eqref{eq:optimal_beta} and \hl{introducing an auxiliary variable $\eta$ as}
\begin{subequations}
    \begin{alignat}{3}
        & 
        \Problem{3}: && \min_{\mathbf{P}, \mathbf{J}, \eta} && 
        \frac{1}{\eta}
        \\
        &  && \mathrm{~~~s.t.~~} && 
        \eta 
            \sum_{j\in\mathcal{N}}\sum_{u\in\mathcal{U}}
            \frac
            {x_{(i,j),u} h_u}
            {\gamma_{(i,j)}}
        \leq B,
        \\
        & && &&
        \eqref{P1:SPSC_probability_threshold},~
        \eqref{P1:power_jamming_sum},~
        \rho_i \geq \hlmath{P_i^{\min}},~\sigma_i \geq 0.
    \end{alignat}
\end{subequations}

\hl{We rigorously prove the following argument by exploring KKT conditions: \textit{The optimal throughput is attained at the maximum transmission power allowed by the system}.
We first simplify the constraints and then demonstrate that the maximum achievable transmission power is determined by the minimum jamming power density.}

As $d^\mathsf{s}_{(i,j)}$ are fixed for given $\mathcal{E}$, the SPSC probability \eqref{eq:SPSC_final_approx_2} can be viewed as a function of $\sigma_i$, denoted as $\mathbb{P}_{(i,j)}(\sigma_i)$.
As $\mathbb{P}_{(i,j)}(\sigma)$ is a monotonically increasing function, the constraint \eqref{P1:SPSC_probability_threshold} can be redefined as
\begin{align}
    \hlmath{
    \sigma_i \geq \mathbb{P}^{-1}_{(i,j)}(\tau)
    =\frac{\alpha_i(d^\mathsf{s}_{(i,j)})^{\alpha_i}n_0}{2G_{(i,j)}\big(1-\tfrac{2}{\alpha_i}\big)}
    \bigg[
    1+\frac{\alpha_i\sin\big(\tfrac{2\pi}{\alpha_i}\big)}
    {2\pi^2\lambda_i(d^\mathsf{s}_{(i,j)})^{2}}\ln\tau
    \bigg].
    }
\end{align}
We remark that $\sigma_i$ is bounded by the maximum of the $\mathbb{P}^{-1}_{(i,j)}(\tau)$, which is  determined by the farthest link from node $i$.
Combined with $\sigma_i\geq 0$, the constraint can be given by
\begin{align}    
    \hlmath{
    \sigma_i \geq
    \tau_i =
    \frac{\alpha_i(d^\mathsf{max}_{i})^{\alpha_i}n_0}{2G_{(i,j)}\big(1-\tfrac{2}{\alpha_i}\big)}
    \bigg[
    1+\frac{\alpha_i\sin\big(\tfrac{2\pi}{\alpha_i}\big)}
    {2\pi^2\lambda_i(d^\mathsf{max}_{i})^{2}}\ln\tau
    \bigg]
    }
    \label{eq:definition_of_tau_i}
\end{align}
where $d^\mathsf{max}_{i}=\max_{j:(i,j)\in\mathcal{E}} x_{(i,j)}d^\mathsf{s}_{(i,j)}$. \vspace{.05cm}

Then, Problem $\Problem{3}$ can be simplified as  
\begin{subequations}
    \begin{alignat}{3}
        & 
        \Problem{3}: && \max_{\mathbf{P}, \eta} && 
        \eta
        \\
        &  && \mathrm{~~~s.t.~~} && 
        \eta             \sum_{j\in\mathcal{N}}\sum_{u\in\mathcal{U}}
            \frac
            {x_{(i,j),u} h_u}
            {\gamma_{(i,j)}}
        \leq B,
        \\
        & && &&
        \rho_i+\sigma_i \leq \hlmath{P_i^{\max}},~
        \rho_i \geq \hlmath{P_i^{\min}},~\sigma_i \geq \tau_i.
        \label{p3:power_jamming_sum}
    \end{alignat}
\end{subequations}

The Lagrangian function of $\Problem{3}$ is
\begin{align}
    \mathcal{L}(\mathbf{P}, \eta, \bm{\lambda}&, \bm{\mu}, \bm{\nu}, \bm{\xi}) = \eta
    - \sum_{i\in\mathcal{I}} \lambda_i
        \bigg[
            B - \eta 
            \sum_{j\in\mathcal{N}}\sum_{u\in\mathcal{U}}
            \frac
            {x_{(i,j),u} h_u}
            {\gamma_{(i,j)}}
        \bigg]
    \nonumber \\ &
    + \sum_{i\in\mathcal{I}} \mu_i(\hlmath{P_i^{\max}}-\rho_i-\sigma_i) + \sum_{i\in\mathcal{I}} \nu_i(\rho_i-\hlmath{P_i^{\min}}) \nonumber
    \\ &
    + \sum_{i\in\mathcal{I}}\xi_i(\sigma_i-\tau_i).
\end{align}
Then, the derivatives of Lagrangian function are provided as
\begin{align}
    \frac{\delta \mathcal{L}}{\delta \eta}
    &= 1 + \sum_{i\in\mathcal{I}}\lambda_i            \sum_{j\in\mathcal{N}}\sum_{u\in\mathcal{U}}
            \frac
            {x_{(i,j),u} h_u}
            {\gamma_{(i,j)}}
    = 0,
    \\
    \frac{\delta \mathcal{L}}{\delta \rho_i}
    &= \lambda_i \eta\frac{\delta A_i}{\delta \rho_i} - \mu_i + \nu_i = 0,
    \frac{\delta \mathcal{L}}{\delta \sigma_i}
    = -\mu_i + \xi_i = 0.
\end{align}
Also, the complementary slackness condition indicates $\nu_i=0$ and $\sigma_i>0$ 
since the nodes in the network can transmit signal to another node.

Suppose $\sigma_i > \tau_i$ for all $i\in\mathcal{I}$, meaning that we allocate more jamming power than the threshold.
Then, we have $\xi_i=\mu_i=0$ by the complementary slackness condition and $\frac{\delta \mathcal{L}}{\delta \sigma_i}$.
Plugging in $\mu_i=\nu_i=0$ on $\frac{\delta \mathcal{L}}{\delta \rho_i}$ results in $\lambda_i=0$, which contradicts $\frac{\delta \mathcal{L}}{\delta \eta}$.
Thus, the optimal point should have $\sigma_i=\tau_i$.

Constraint \eqref{p3:power_jamming_sum} now can be viewed as $\rho_i\leq \hlmath{P_i^{\max}} - \tau_i$.
If we assume $\rho_i<\hlmath{P_i^{\max}}-\tau_i$ for $i\in\mathcal{I}$, then $\mu_i=0$ to satisfy the complementary slackness condition.
Again, we obtain $\lambda_i=0$ to meet the optimality condition $\frac{\delta \mathcal{L}}{\delta \rho_i}$, which contradicts $\frac{\delta \mathcal{L}}{\delta \eta}$.

Combining the above conditions, the optimal transmission and jamming power, $\rho^*_i$ and $\sigma^*_i$, are given as
\begin{align}
    \rho^*_i = \hlmath{P_i^{\max}}-\tau_i,~\sigma^*_i = \tau_i,~\forall i \in \mathcal{I},
    \label{eq:optimal_rho_sigma}
\end{align}
for $\tau_i$ in \eqref{eq:definition_of_tau_i}.
We correspondingly denote the optimal spectral efficiency as $\gamma^*_{(i,j)}$ by putting $\rho^*_i$ into \eqref{eq:spectral_efficiency}.

The following theorem guarantees that the RRM in \eqref{eq:optimal_beta} and \eqref{eq:optimal_rho_sigma} is optimal for the given graph $\mathcal{G}$:
\begin{theorem}[\bf{Global optimality of RRM}]
Let $\mathcal{G}$ be a fixed network topology. Then, the solution $(\beta^*_{(i,j),u}, \rho^*_i, \sigma^*_i)$\vspace{.05cm} obtained by solving $\Problem{1}$ under the given $\mathcal{G}$ is globally optimal.
\label{theorem:1}
\end{theorem}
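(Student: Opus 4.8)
The plan is to prove global optimality by a two-stage partial-minimization argument that follows the order in which the closed forms \eqref{eq:optimal_beta} and \eqref{eq:optimal_rho_sigma} were derived. With $\mathcal{G}$ fixed, I would first rewrite $\Problem{1}$ as the iterated maximization
\[
    \max_{\mathbf{P},\mathbf{J}}\ \Big[\ \max_{\mathbf{B}}\ \min_{u\in\mathcal{U},\,(i,j)\in\mathcal{E}_u}\frac{\beta_{(i,j),u}\gamma_{(i,j)}}{h_u}\ \Big],
\]
which is exact because a maximum over a set equals the maximum over the outer variable of the inner maxima, and the outer feasible set is precisely the projection onto $(\mathbf{P},\mathbf{J})$ of the feasible set of $\Problem{1}$; this projection coincides with the constraints \eqref{P1:SPSC_probability_threshold}, \eqref{P1:power_jamming_sum}, \eqref{P1:positive_control_variable}, since none of them involves $\mathbf{B}$ and $\mathbf{B}=\mathbf{0}$ always satisfies \eqref{P1:resource_allocation}.

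For the inner maximization over $\mathbf{B}$, I would observe that (with the $\gamma_{(i,j)}$ frozen) $\Problem{2}$ is a convex program: the objective $1/\eta$ is convex on $\eta>0$ and both \eqref{p2:constarint_1} and \eqref{p2:constarint_2} are affine in $(\mathbf{B},\eta)$. Because the constraints are affine, the KKT system is both necessary and sufficient for global optimality, so the stationary point computed in \eqref{eq:Lagrange_beta_2}--\eqref{eq:optimal_beta} is the global maximizer of the inner problem, and the induced optimal value is $g(\mathbf{P})=\min_{i\in\mathcal{I}}\eta_i(\mathbf{P})$ with $\eta_i(\mathbf{P})=B\big(\sum_{j\in\mathcal{N}}\sum_{u\in\mathcal{U}}x_{(i,j),u}h_u/\gamma_{(i,j)}\big)^{-1}$. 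The only fine point is that $\eta>0$ at the optimum, which holds because $\rho_i\ge P_i^{\min}>0$ forces every $\gamma_{(i,j)}>0$.

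The heart of the proof is showing that the outer maximization of $g$ over the feasible $(\mathbf{P},\mathbf{J})$ set is attained at \eqref{eq:optimal_rho_sigma}, and here I would not appeal to convexity of $\Problem{3}$ (which need not hold, as $1/\gamma_{(i,j)}=1/\log_2(1+c\rho_i)$ is in general neither convex nor concave in $\rho_i$), but instead to monotonicity and separability. The feasible set for $(\mathbf{P},\mathbf{J})$ decomposes over nodes into the per-node boxes $\{\rho_i\ge P_i^{\min},\ \sigma_i\ge 0,\ \rho_i+\sigma_i\le P_i^{\max}\}$ intersected with \eqref{P1:SPSC_probability_threshold}, and the latter --- since $\mathbb{P}_{(i,j)}(\sigma)$ is monotonically increasing --- is equivalent to $\sigma_i\ge\tau_i$ with $\tau_i$ given by \eqref{eq:definition_of_tau_i}. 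At the same time $g$ does not depend on $\sigma_i$ except through the budget $\rho_i+\sigma_i\le P_i^{\max}$, and $g$ is coordinatewise non-decreasing in each $\rho_i$ because $\gamma_{(i,j)}$ is increasing in $\rho_i$, hence each $\eta_i$ is non-decreasing in $\rho_i$, and a minimum of non-decreasing functions is non-decreasing. Therefore, for each $i$ independently one may set $\sigma_i$ to its smallest feasible value $\tau_i$ and then $\rho_i$ to its largest feasible value $P_i^{\max}-\tau_i$ without decreasing $g$, which is exactly \eqref{eq:optimal_rho_sigma}; this uses only the feasibility assumption $P_i^{\min}\le P_i^{\max}-\tau_i$ (otherwise $\Problem{1}$ is infeasible for this $\mathcal{G}$ and the theorem is vacuous). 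Substituting $\rho_i^*$ into \eqref{eq:optimal_beta} through $\gamma_{(i,j)}^*$ then yields $(\beta_{(i,j),u}^*,\rho_i^*,\sigma_i^*)$ as a global optimum of $\Problem{1}$ for the given $\mathcal{G}$.

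I expect the main obstacle to be this last step rather than the KKT bookkeeping, which the preceding subsections already carry out: one must argue that gluing the two one-sided optimizations returns a genuine global optimum --- the partial-minimization identity must be stated at the level of feasible sets, and the outer stage must be closed by the monotonicity/separability argument, because $\Problem{3}$ is not manifestly convex and a bare KKT stationarity argument would only certify a candidate.
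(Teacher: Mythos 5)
Your proposal is correct and reaches the same closed forms, but it takes a genuinely different route on the decisive step. For the bandwidth stage both you and the paper argue identically: $\Problem{2}$ is convex with affine constraints, so the KKT point \eqref{eq:optimal_beta} is globally optimal. The divergence is in the power stage. The paper asserts that in $\Problem{3}$ ``all constraints are affine'' and invokes KKT sufficiency again; but the bandwidth constraint $\eta\sum_{j}\sum_{u}x_{(i,j),u}h_u/\gamma_{(i,j)}\le B$ is not affine in $(\mathbf{P},\eta)$ --- it is a product of $\eta$ with a nonlinear function of $\rho_i$ through $\gamma_{(i,j)}=\log_2(1+c\,\rho_i)$, and is not manifestly jointly convex either --- so the paper's KKT bookkeeping, taken literally, only certifies a stationary candidate. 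You sidestep this by replacing convexity with monotonicity and per-node separability: the partially maximized objective $g(\mathbf{P})$ is coordinatewise non-decreasing in each $\rho_i$, $\sigma_i$ enters only through the budget and the lower bound $\sigma_i\ge\tau_i$, so pushing $\sigma_i$ down to $\tau_i$ and $\rho_i$ up to $P_i^{\max}-\tau_i$ node by node cannot decrease $g$. This recovers \eqref{eq:optimal_rho_sigma} with a certificate that does not depend on any convexity claim for $\Problem{3}$, and your explicit partial-maximization identity also supplies the gluing step that the paper compresses into ``combining these yields the full set of decision variables.'' In short, the paper's proof is shorter but rests on a convexity assertion that does not hold as stated for $\Problem{3}$; your argument repairs exactly that point and is the more robust of the two. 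Two minor items to tidy: $\tau_i$ should be read as $\max\{0,\mathbb{P}^{-1}_{(i,j)}(\tau)\}$ so that $\sigma_i\ge 0$ is absorbed correctly, and the monotonicity of $\tilde{\mathbb{P}}_{(i,j)}$ in $\sigma_i$ (which you inherit from the paper to convert \eqref{P1:SPSC_probability_threshold} into $\sigma_i\ge\tau_i$) deserves a one-line check from \eqref{eq:SPSC_final_approx_2}.
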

\begin{proof}
The objective function $1/\eta$ in $\Problem{2}$ is convex for $\eta>0$, and the constraints \eqref{p2:constarint_1} and \eqref{p2:constarint_2} are affine. 
Thus, the KKT conditions serve as both necessary and sufficient conditions, guaranteeing $\beta^*_{(i,j),u}$ is a global optimum solution of $\Problem{2}$. Similarly, in $\Problem{3}$, the objective function is convex, and all constraints including \eqref{p3:power_jamming_sum} are affine. Thus, KKT conditions are necessary and sufficient, and the solution $(\rho^*_i,\sigma^*_i)$ is globally optimal for $\Problem{3}$.

Then, from \eqref{eq:optimal_beta}, $\beta^*_{(i,j),u}$ can be expressed as an explicit function of $(\rho^*_i,\sigma^*_i)$. 
Therefore, combining these yields the full set of decision variables for $\Problem{1}$ under fixed $\mathcal{G}$ in the max-min throughput problem \cite{Boyd04-cvx, Gong11-TSP}.
\end{proof}

\subsection{Monte-Carlo Relay Routing}\label{sec:mc_routing}

\paragraph{Problem formulation} We have analytically found the optimal $\beta^*_{(i,j),u}$, $\rho^*_i$, and $\sigma^*_i$ for given graph $\mathcal{E}$.
Then, the problem can be reformulated as
\begin{align}
    \hspace{-.2cm}
    \max_{\mathcal{G}} \hspace{-.1cm}
        \min_{\substack{u\in\mathcal{U}, \\(i,j)\in\mathcal{E}_u}}
        \hspace{-.1cm}
        B 
        \bigg(
        \sum_{j\in\mathcal{N}}\sum_{u\in\mathcal{U}}
        \frac
        {x_{(i,j),u} h_u}
        {\gamma^*_{(i,j)}}
        \bigg)^{-1}
        \mathrm{s.t.}~ \eqref{P1:SPSC_probability_threshold},\eqref{P1:ST_topology}.
        \label{eq:routing_problem}
\end{align}

The SPSC constraint \eqref{P1:SPSC_probability_threshold} can be converted into an explicit link-distance constraint in the routing graph.
This transformation is necessary because SPSC probability does not directly indicate how the graph topology is constrained.
Specifically, the SPSC constraint determines the maximum link distance, defined as:
\begin{equation}
    d^{\mathsf{s}}_{(i,j)}\le D_i^{\mathsf{max}},\quad \forall j\in\mathcal{N}
    \label{eq:maximum_link_distance}
\end{equation}
where $D^{\mathsf{max}}_i$ indicates the farthest allowable distance from node $i$ at the maximum jamming power density \hl{$P_i^{\max} - P_i^{\min}$.}
\hl{As the SPSC approximation in \eqref{eq:SPSC_final_approx_2} increases monotonically with respect to $d^{\mathsf{s}}_{(i,j)}$, the value of $D^{\mathsf{max}}_i$can be determined by applying the bisection method to \eqref{eq:SPSC_final_approx_2} for the given $\tau$ and $\sigma_i=P_i^{\max} - P_i^{\min}$.}

\begin{figure}[htb]
    \centering
    \includegraphics[width=\linewidth]{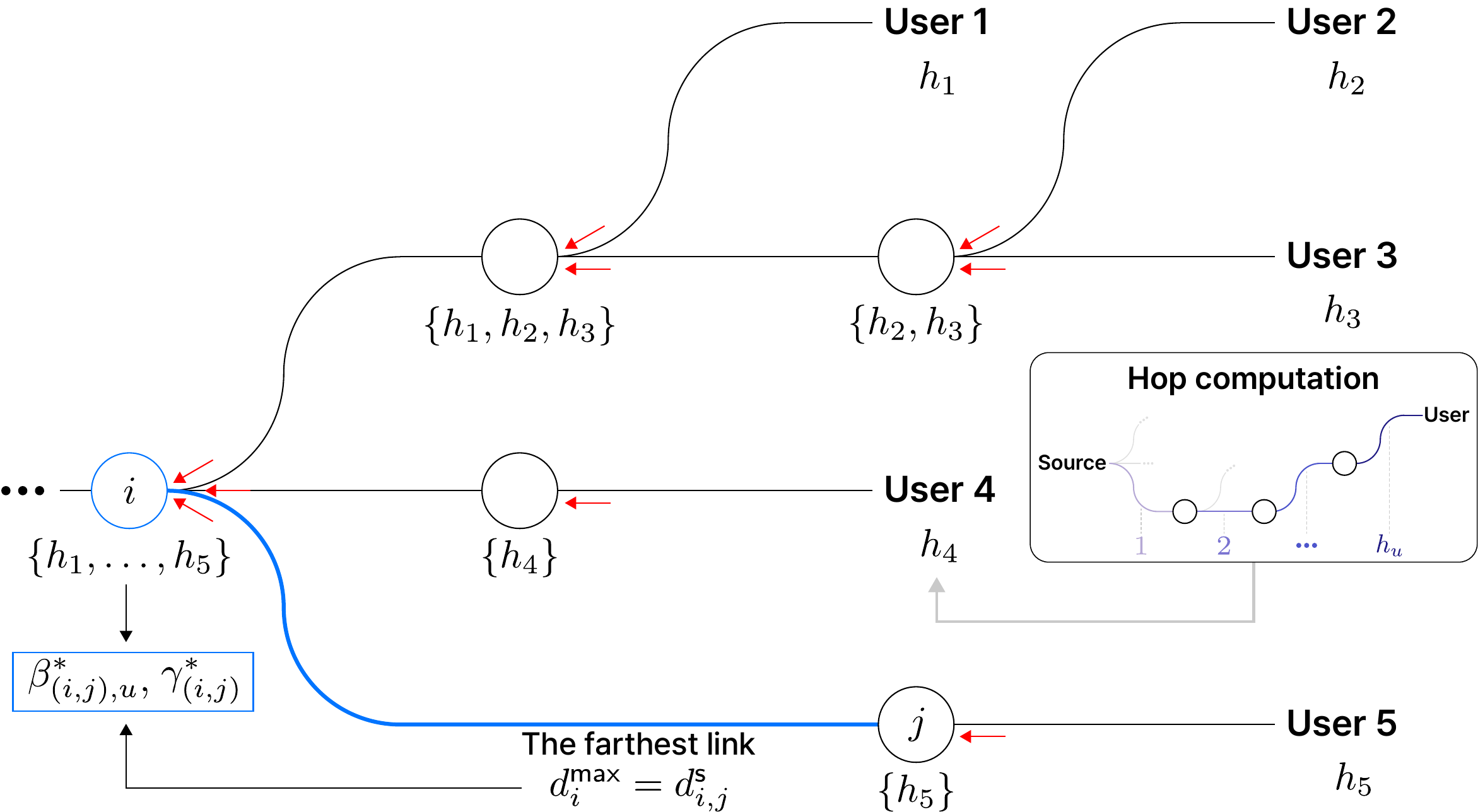}
    \caption{Visualization of variable flows for node $i$ to optimize radio resources.}
    \label{fig:graph_example}
\end{figure}

Solving the optimization problem in \eqref{eq:routing_problem} is challenging, as the objective cannot be computed before the graph is completed.
As depicted in Fig.~\ref{fig:graph_example}, determining $\beta^*_{(i,j),u}$, $\rho^*_i$, and $\sigma^*_i$ requires $h_u$ and $x_{(i,j),u}$ from the child nodes. 
These variables can be computed only after the complete routing graph $\mathcal{G}$ is established.
This interdependence between routing and resource allocation significantly complicates the routing optimization.

\paragraph{Routing algorithm} This challenge motivates the Monte-Carlo relay routing (MCRR) in Alg.~\ref{alg:MCRR}, which builds the relay graph $\mathcal{G}$ by sequentially adding source-to-user paths.
\hl{When a new user path is added to the graph, the MCRR algorithm propagates the user information from the user node back to the source node, as illustrated in Fig.~\ref{fig:graph_example}.
MCRR then exclusively updates the throughput for nodes along this path.
Nodes outside this path remain unaffected and therefore do not require re-optimization, making MCRR both node-specific and computationally efficient.
Leveraging the low computational complexity of the throughput calculation, MCRR iteratively swaps candidate user paths to construct a graph that yields the highest throughput.
We provide a visualization of the incremental update in Appendix~\ref{Supple:Visualization of the MCRR Algorithm}.}

The MCRR in Alg.~\ref{alg:MCRR} operates as follows:  
We first build the initial directed graph $\mathcal{G}_{\mathrm{all}}$ by including all possible edge whose length does not exceed the maximum feasible link distance $D_i$ in Eq.~\eqref{eq:maximum_link_distance} (\textbf{Line~1-2}).
Then, for each user $u$, we generate a set of $K$ candidate source-to-user path by collecting a shortest-path on $\mathcal{G}_{\mathrm{all}}$, assigning independent random weights $\mathcal{U}(0,1)$ to edge costs at each iteration (\textbf{Line~3-8}).
This mechanism thus operates similar to a biased random walk from node $0$ toward user $u$, constructing paths with an proper hop count while exploring topologies that may offer superior performance.

The MCRR refinement process (\textbf{Line~9-20}) creates multiple candidate graphs by removing and replacing paths to select the optimal configuration. 
It begins by temporarily removing the existing path of the user evaluated from the current routing graph (\textbf{Line~11-12}).
Then, each pre-computed candidate path is inserted once at a time, creating different graph configurations that satisfies ST topology \eqref{P1:ST_topology} (\textbf{Line~14}). 
As each change is implemented, its effects immediately propagate through the bandwidth and power allocation equations to upstream nodes (\textbf{Line~15}). 
The algorithm evaluates the minimum throughput of each candidate graph, committing the changes only if the throughput improves; otherwise, it reverts to the original path (\textbf{Line~17-18}).

\begin{algorithm}[htb]
\caption{Monte-Carlo Relay Routing (MCRR)}\label{alg:MCRR}
\begin{algorithmic}[1]
\REQUIRE Nodes $\mathcal{N}$ 
\ENSURE Solution graph $\mathcal{G}^*=(\mathcal{N}^*,\mathcal{E}^*)$

\STATE $\mathcal{E} \leftarrow \{(i,j)|d^\mathsf{s}_{(i,j)}\leq D_i,\forall i,j\in\mathcal{N}\}$,
$\mathcal{N}^*\leftarrow \{\}$, $\mathcal{E}^*\leftarrow \{\}$\!\!\!\!\!
\STATE $\mathcal{G}_{\mathrm{all}} \leftarrow (\mathcal{N}, \mathcal{E})$, $\mathcal{G}^*\leftarrow (\mathcal{N}^*, \mathcal{E}^*)$, $T_{\mathcal{G}^*}\leftarrow 0$

\FOR{user $u$ in $\mathcal{U}$}
    \FOR{$k$ in $1,\dots,K$}
        \STATE $\!\mathcal{E}_{u,k}\!\leftarrow\!$ Biased random walk from node $0$ to $u$ in $\mathcal{G}_{\mathrm{all}}$\!\!\!
        \STATE $\!\mathcal{N}_{u,k}\leftarrow$ nodes in the path $\mathcal{P}_{u,k}$
    \ENDFOR
\ENDFOR
\WHILE{$T-T_{\mathcal{G}^*}<\epsilon$}
    \FOR{each user $u$}  
        \STATE $\mathcal{E}_{-u} \leftarrow \{(i,j)\in\mathcal{E}^*| \sum_{u\in\mathcal{U}}x_{(i,j),u}>1\}$
        \STATE $\mathcal{N}_{-u} \leftarrow \{(i,j)~|~(i,j)\in\mathcal{E}_{-u}\}$
        \FOR{$k$ in $1,..,K$}
            \STATE $\mathcal{G}_{u,k}\leftarrow(\mathcal{N}_{-u}\cup\mathcal{N}_{u,k},~ \mathcal{E}_{-u}\cup\mathcal{E}_{u,k})$
            \STATE $\displaystyle T_{\mathcal{G}_{u,k}} \leftarrow \min_{i \in \mathcal{N}_{u,k}} \hspace{-.1cm} B \Big( \sum_{j\in\mathcal{N}}\sum_{u\in\mathcal{U}} \frac {x_{(i,j),u} h_u} {\gamma^*_{(i,j)}} \Big)^{-1}$
        \vspace{-.15cm}
        \ENDFOR
        \STATE $\mathcal{G}^*\leftarrow \!\arg\!\min_{\mathcal{G}\in \{\mathcal{G}^*,\mathcal{G}_{u,1},\dots,\mathcal{G}_{u,k}\} } T_\mathcal{G}$, $(\mathcal{N}^*, \mathcal{E}^*) \leftarrow \mathcal{G}^*$\!\!\!
        \STATE $T\leftarrow \min_{\mathcal{G}\in \{\mathcal{G}^*,\mathcal{G}_{u,1},\dots,\mathcal{G}_{u,k}\} } T_\mathcal{G}$
    \ENDFOR
\ENDWHILE
\end{algorithmic}
\end{algorithm}

\paragraph{Computational complexity of MCRR} Let $N$ be the number of nodes, $K$ the number of candidate paths per user, and $R$ the number of refinement rounds.
The computational cost of MCRR is decomposed as follows:
\begin{itemize}
    \item \textbf{Path sampling}: $K$ randomized shortest paths takes the complexity of $\mathcal{O}(KN\log N)$.

    \item \textbf{Iterative refinement}: In each of the $R$ rounds, every user examines $K$ candidate paths. For each, the algorithm updates the graph and recomputes throughput by locally propagating changes to relevant upstream nodes, giving total refinement cost $\mathcal{O}(RNK)$.
\end{itemize}
Then, total complexity of Alg.~\ref{alg:MCRR} is $\mathcal{O}(KN(\log N+R))\approx\mathcal{O}(KN\log N)$ when $N\gg R$.

\section{Numerical Experiments}\label{sec:experiments}
This section addresses the following research questions:

\noindent
\textbf{RQ 1.} How accurate is the closed-form approximation of the SPSC probability? \ding{220} Sec.~\ref{subsec:SPSC Probability Analysis}\\
\textbf{RQ 2.} How does the max–min throughput in SAGSIN change when security parameters change? \ding{220} Sec.~\ref{subsec:mmf_results} \\
\textbf{RQ 3.} How does the proposed framework perform in real-world network deployments? {\ding{220}~Sec.~\ref{subsec:demo}}

Answering the questions, we broadly provide a comprehensive validation from the feasibility of the system model to the effectiveness of the proposed scheme. 
Numerical experiments and analysis exploring each question are presented in Sec.~\ref{sec:experiments}-A to Sec~\ref{sec:experiments}-C. 
All simulations are implemented in Python~3.12 on an AMD Ryzen\textsuperscript{\texttrademark}~9~5800X processor.

\subsection{Analysis on SPSC Probability Approximation}\label{subsec:SPSC Probability Analysis}

\subsubsection{Calibration of the SPSC approximation over Eve density}
\label{sec:spsc_density}

\begin{figure}[htb]
  \centering
  \includegraphics[width=\columnwidth]{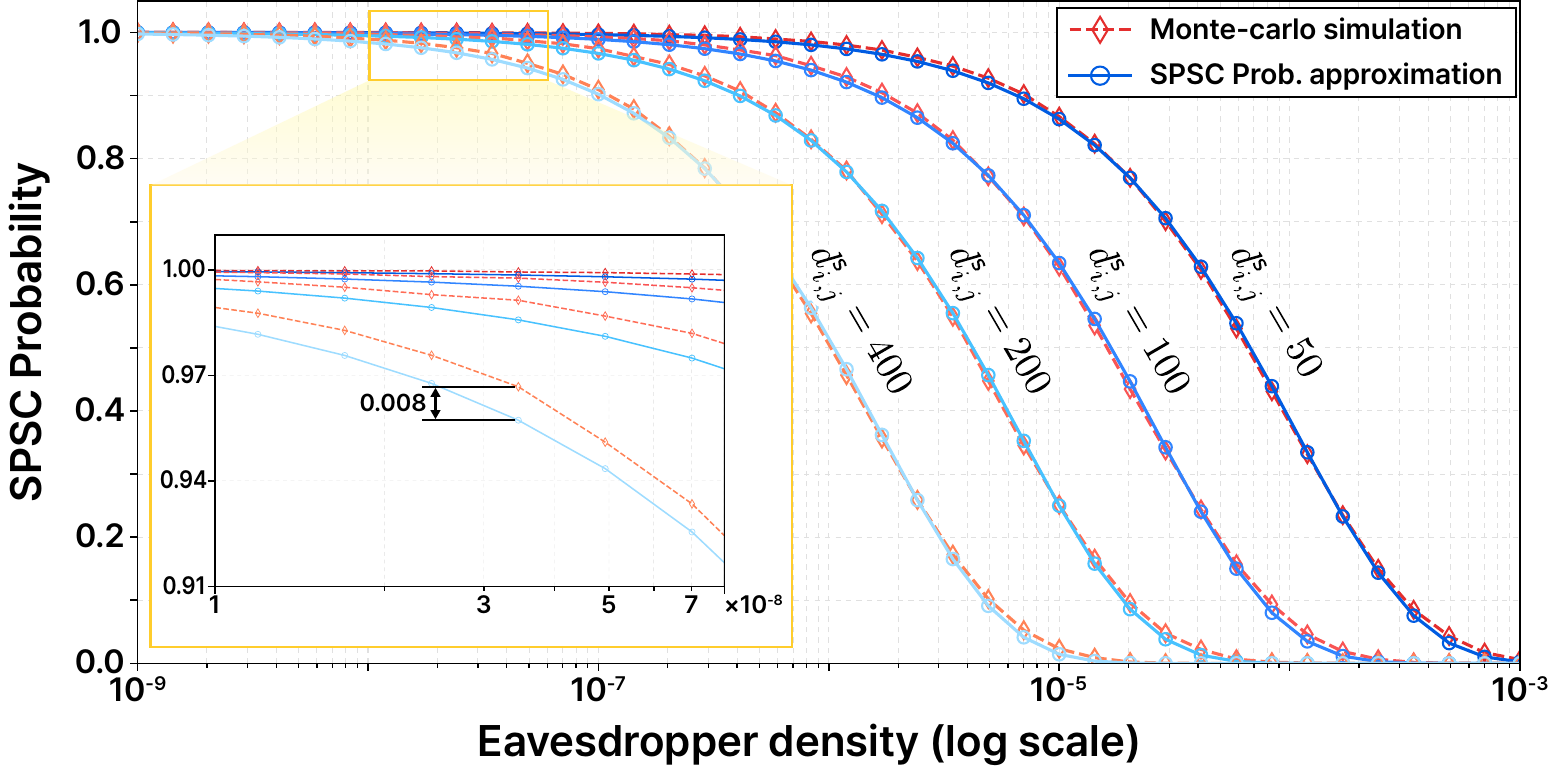}
  \vspace{-.4cm}
  \caption{\!The SPSC probability versus Eve density for various link distances.}
  \label{fig:spsc_density}
\end{figure}

Fig.~\ref{fig:spsc_density} shows comparisons of the 50,000 Monte–Carlo evaluations of \eqref{eq:SPSC_0} and its closed–form approximation in \eqref{eq:SPSC_final_approx_2} for four representative link lengths of $\{50,\,100,\,200,\,400\}\,\mathrm{km}$.

\hlframe{We introduced two empirical calibration parameters, $a_d$ and $p_d$, to compensate for the approximation gap.
These factors serve to correct complex, non-ideal effects in the actual environment that the theoretical model may fail to capture.
Accordingly, \eqref{eq:SPSC_final_approx_2} can be calibrated as
\begin{align}
    \label{eq:calibarated_SPSC_approximation}
    \hat{\mathbb{P}}_{(i,j)}&
    = \exp\bigg[
        -a_d \kappa_i^{p_d}
    \\[-0.2cm] & ~
        \times \Big[
        \Gamma \big(1-\frac{2}{\alpha_i} \big) 
        -
        \frac{2 \sigma_i G_{(i,j)}}
                 {\alpha_i (d^\mathsf{s}_{(i,j)})^{\alpha_i} n_0}
        \Gamma \big(2-\frac{2}{\alpha_i} \big)
        \Big]        \bigl(d_{(i,j)}^{\mathsf{s}}\bigr)^2
    \bigg].
    \nonumber
\end{align}
This correction preserves the inherent correlation between distance and SPSC probability while ensuring accurate estimation under varying Eve densities.} 
The \((a_d,p_d)\) pairs used in our experiments are:
$(a_{50},p_{50})=(0.224, 0.806)$;
$(a_{100},p_{100})=(0.170, 0.805)$;
$(a_{200},p_{200})=(0.133, 0.807)$; and
$(a_{400},p_{400})=(0.102, 0.807)$.

Across the whole distance range, the analytical expression faithfully tracks the Monte–Carlo result; the largest absolute gap occurring in the inset is $<0.008$, confirming that \eqref{eq:SPSC_final_approx_2} tightly approximates $\mathbb{P}_{(i,j)}$.
\hl{
An additional analysis on the various fading effects is provided in Appendix~\ref{supple:SPSC Probability with Various Fading Effects}.
}

\subsubsection{Accuracy of the SPSC approximation over link distance}
\label{sec:spsc_distance}

\begin{figure}[htb]
  \centering
  \includegraphics[width=\columnwidth]{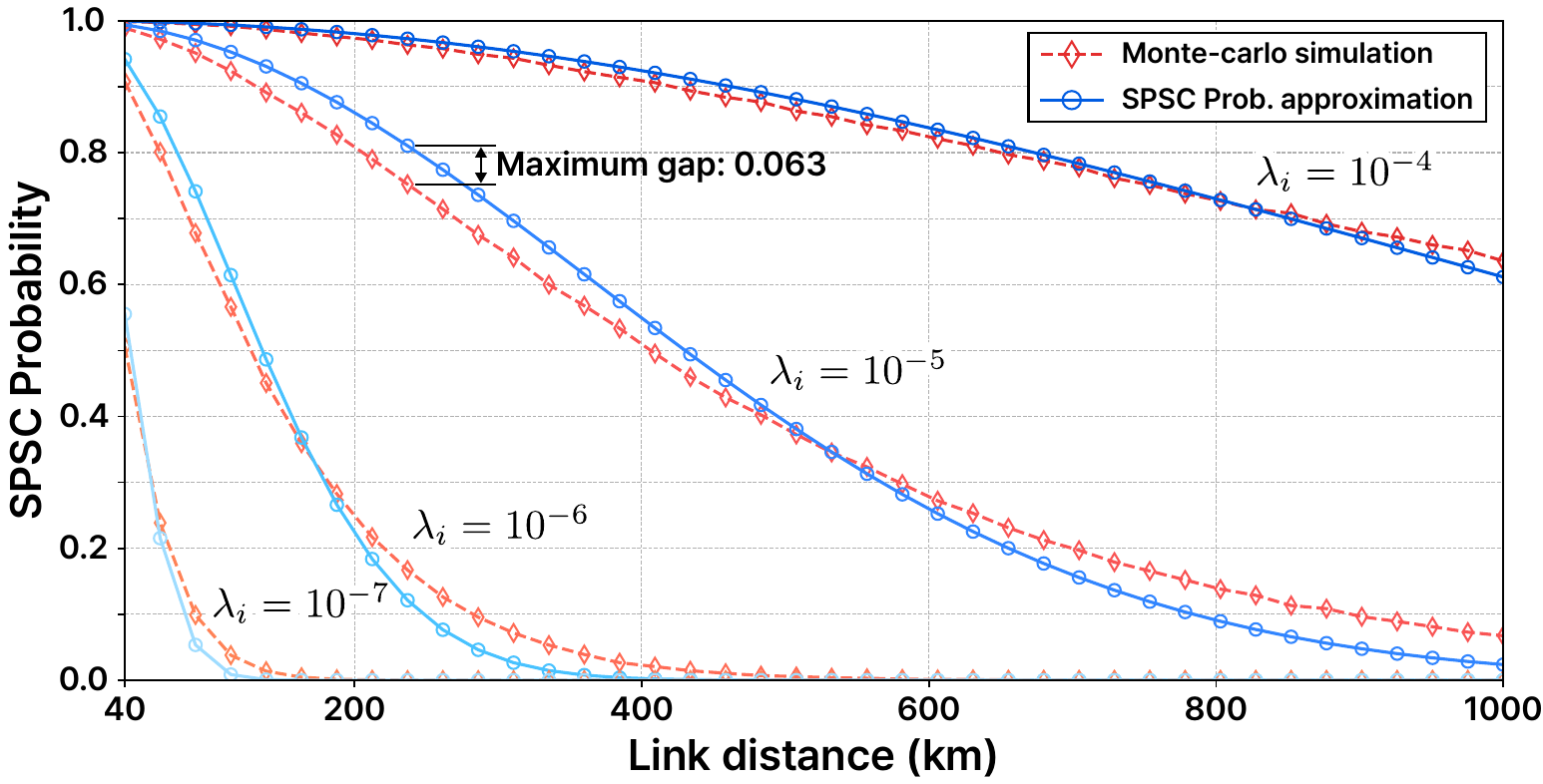}
  \vspace{-.4cm}
  \caption{\!The SPSC probability versus link distance for various Eve densities.}
  \label{fig:spsc_distance}
\end{figure}

Fig.~\ref{fig:spsc_distance} benchmarks the SPSC probability
obtained from 50,000 Monte–Carlo evaluations of \eqref{eq:SPSC_0} against the closed-form approximation \eqref{eq:SPSC_final_approx_2} for the four
Eve densities $\{10^{-4},10^{-5},10^{-6},10^{-7}\}$ km$^{-2}$.
The largest disparity, $0.063$, occurs around $320$ km of link distance when $\lambda_i=10^{-5}$.
When the link distance increases, the SPSC approximation shows larger degradation than the Monte-Carlo simulation.
This is attributed to the assumption that an infinite number of Eves are distributed over an infinite region, whereas in real environments both the number of Eves and the area are finite.

\subsubsection{Required jamming power versus link distance}
\label{sec:jamming_vs_d}

\begin{figure}[htb]
  \centering
  \includegraphics[width=\columnwidth]{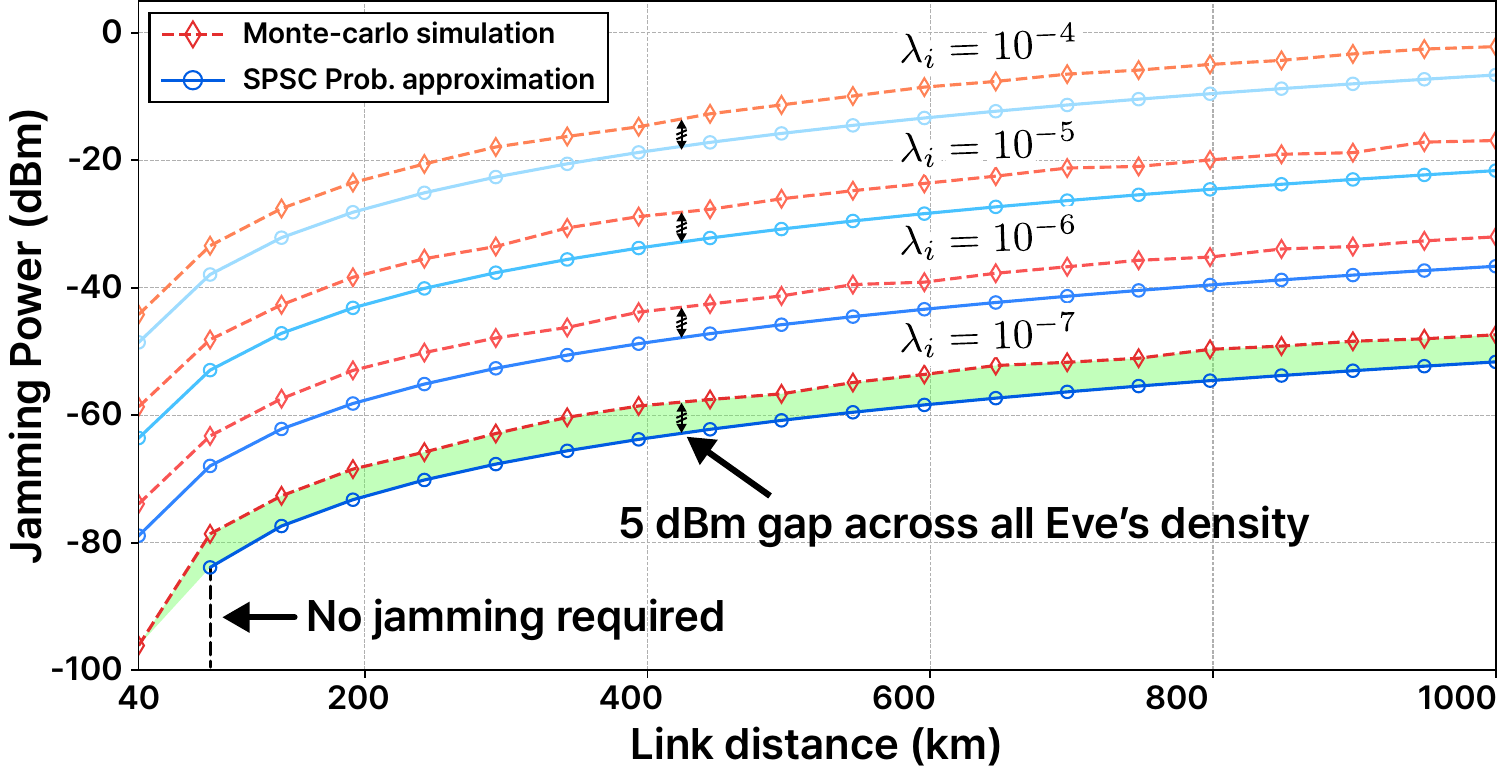}
  \vspace{-.4cm}
  \caption{\!Required jamming power versus link distance for various Eve densities.}
  \label{fig:jam_vs_distance}
\end{figure}

Fig.~\ref{fig:jam_vs_distance} plots the minimum jamming power to guarantee the target SPSC probability threshold $\tau=0.9999$.
The results are obtained (i) numerically from 500,000 Monte-Carlo simulation of \eqref{eq:SPSC_0} and (ii) via the closed-form inversion \eqref{eq:definition_of_tau_i} of the SPSC approximation.\footnote{The minimum jamming power is determined using the bisection method.
Due to the high variance of the experiments, the number of simulations is increased tenfold.}

As shown in Fig.~\ref{fig:jam_vs_distance}, the minimum required jamming power computed by the approximation consistently demonstrates a deviation of approximately 5 dBm across all regions.
This trend aligns with the over-estimation of the SPSC approximation in Fig.~\ref{fig:spsc_distance} when the link distance is relatively short. 
Figs.~\ref{fig:spsc_density} and~\ref{fig:jam_vs_distance} collectively suggest that the SPSC approximation \eqref{eq:SPSC_final_approx_2} needs to be corrected by the computed offset when determining the minimum jamming power in \eqref{eq:definition_of_tau_i} and the maximum link distance in \eqref{eq:maximum_link_distance}.

\subsection{Analysis on Max-Min Throughput }\label{subsec:mmf_results}

\begin{figure*}[htb]
  \centering
  \includegraphics[width=\textwidth]{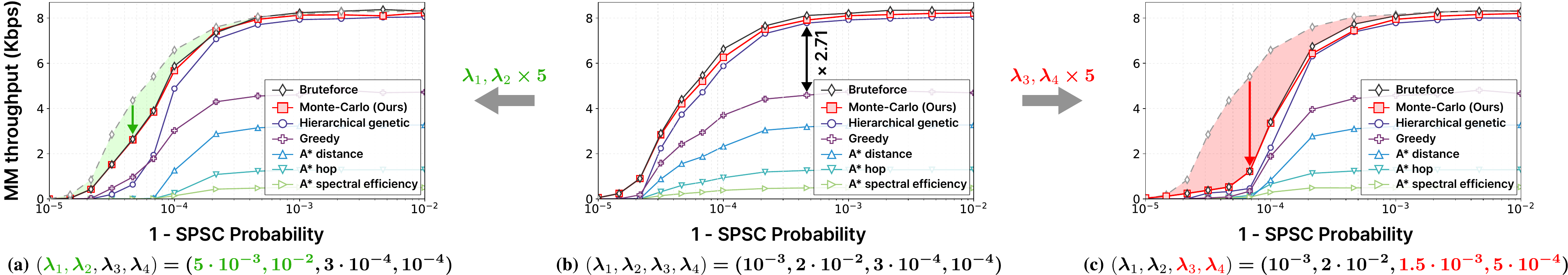}
  \vspace{-.4cm}
  \caption{Max-min throughput versus SPSC outage probability. 
  \hl{(a), (b), and (c) measure max-min throughput of SAGSINs in the same environmental settings, except for the Eve densities.}
  The gray dashed lines in (a) and (c) represent the numerical results from the Bruteforce method in (b), illustrating the relative throughput degradation as \textcolor[HTML]{44AC48}{\textbf{green}} and \textcolor[HTML]{EE2928}{\textbf{red}} areas in (a) and (b).
  (a) The max-min throughput of \textbf{Bruteforce} at $1-\tau=5{\cdot}10^{-5}$ decreases from $4.4$\,kbps in (b) to $2.6$\,kbps.
  (c) The performance drop is more pronounced, with the \textbf{Bruteforce} throughput at $1-\tau=7{\cdot}10^{-5}$ falling from $5.5$\,kbps to $1.2$\,kbps.
  }
  \label{fig:mmf_spsc}
\end{figure*}

We evaluate the max-min throughput by changing the security parameters and compare the proposed scheme with several baselines.
The SAGSIN networks are randomly generated using real-world terrain data across diverse latitudes and longitudes.

\paragraph{Baselines} Each baseline employs a different graph‐optimization strategy,
but they all adopt the same radio resource $\mathbf{B}$, $\mathbf{P}$, and $\mathbf{J}$ as the optimal radio‐resource allocation is obtained in Sec.~\ref{sec:problem_solution}.

\begin{itemize}
    \item \textbf{Bruteforce (Naive upper bound):} Exhaustively evaluates all routing graphs to find an upper-bound on minimum throughput.
    The number of exhaustive search trials is set to 5,000.
    \item \textbf{Hierarchical genetic:} Is a canonical genetic algorithm with elitism \cite{weise2009-genetic}, but operating in two hierarchical steps \cite{Schaefer2003-FOGA, Ciepiela2008-ICCS}.
    In the first step, each node is binary-encoded as a gene to determine the set $\mathcal{N}^*$.
    The second step determines the set of edges $\mathcal{E}$ by constructing feasible spanning trees among the selected node genes, subject to the graph topology constraints \eqref{P1:ST_topology} and the maximum link distance constraint \eqref{eq:maximum_link_distance}.
    The genetic algorithm is configured as 5,000 generations, 50 populations, 6 elite counts, and 5\% mutation probability.
    \item \textbf{Greedy:} Iteratively selects the relay path that maximizes immediate throughput gain of the user.
    \item \textbf{Variants of A$^*$:}
    Use fixed link cost metrics in A$^*$ search.
    While these metrics do not exactly optimize $\Problem{1}$ (because the throughput of nodes and edges changes depending on the graph topology), they serve as useful benchmarks for comparing the performance of different schemes.
    Moreover, graphs generated by A$^*$ always have spanning-tree structures \cite{Cormen2009-dijkstra}, ensuring that the graph solutions remain within the feasible domain of $\Problem{1}$.

    ~~The choice of metrics in A$^*$ is motivated by the objective function in $\Problem{1}$, which is\\
    \noindent - \textbf{A$^*$ distance:} Minimizes the summation of the link distances. \\
    - \textbf{A$^*$ hop:} Minimizes the number of source-to-user hops. \\
    - \textbf{A$^*$ spectral efficiency:} Maximizes the sum spectral efficiency of the source-to-user routes.
\end{itemize}

\begin{figure*}[htb]
  \centering
  \includegraphics[width=\textwidth]{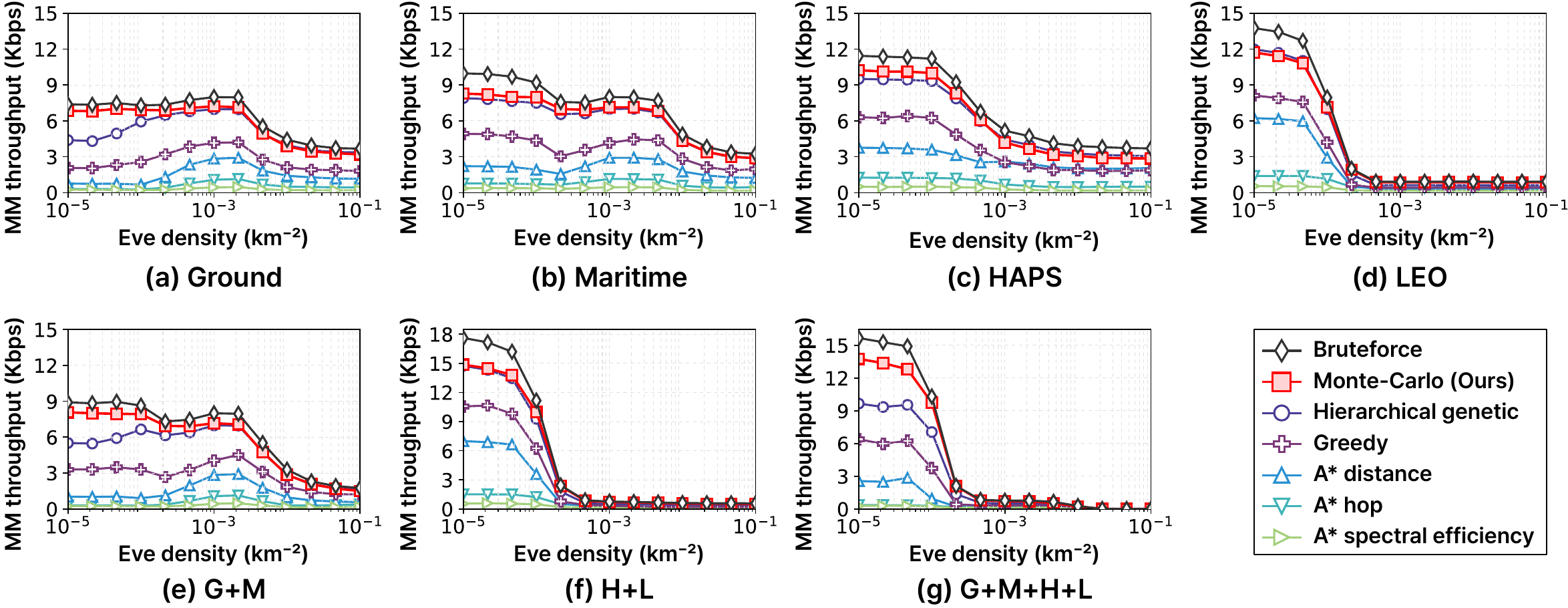}
  \vspace{-.5cm}
  \caption{Max-min throughput versus Eve density for various SAGSIN scenarios. 
  \hl{Each graph depicts the max–min throughput while varying the Eve density of the network layer indicated in the caption. The densities of the other layers are fixed to their respective values in the baseline configuration $(\lambda_1, \lambda_2, \lambda_3, \lambda_4) = (10^{-3}, 2{\cdot}10^{-3}, 3{\cdot}10^{-4}, 10^{-4})$.}
  Notations \textbf{G+M}, \textbf{H+L}, and \textbf{G+M+H+L} in (e), (f), and (g) correspond to ground and maritime; HAPs and LEO; and ground, maritime, HAPs and LEO, respectively.}
  \label{fig:mmf_density}
\end{figure*}

\begin{figure}[htb]
  \centering
  \includegraphics[width=\columnwidth]{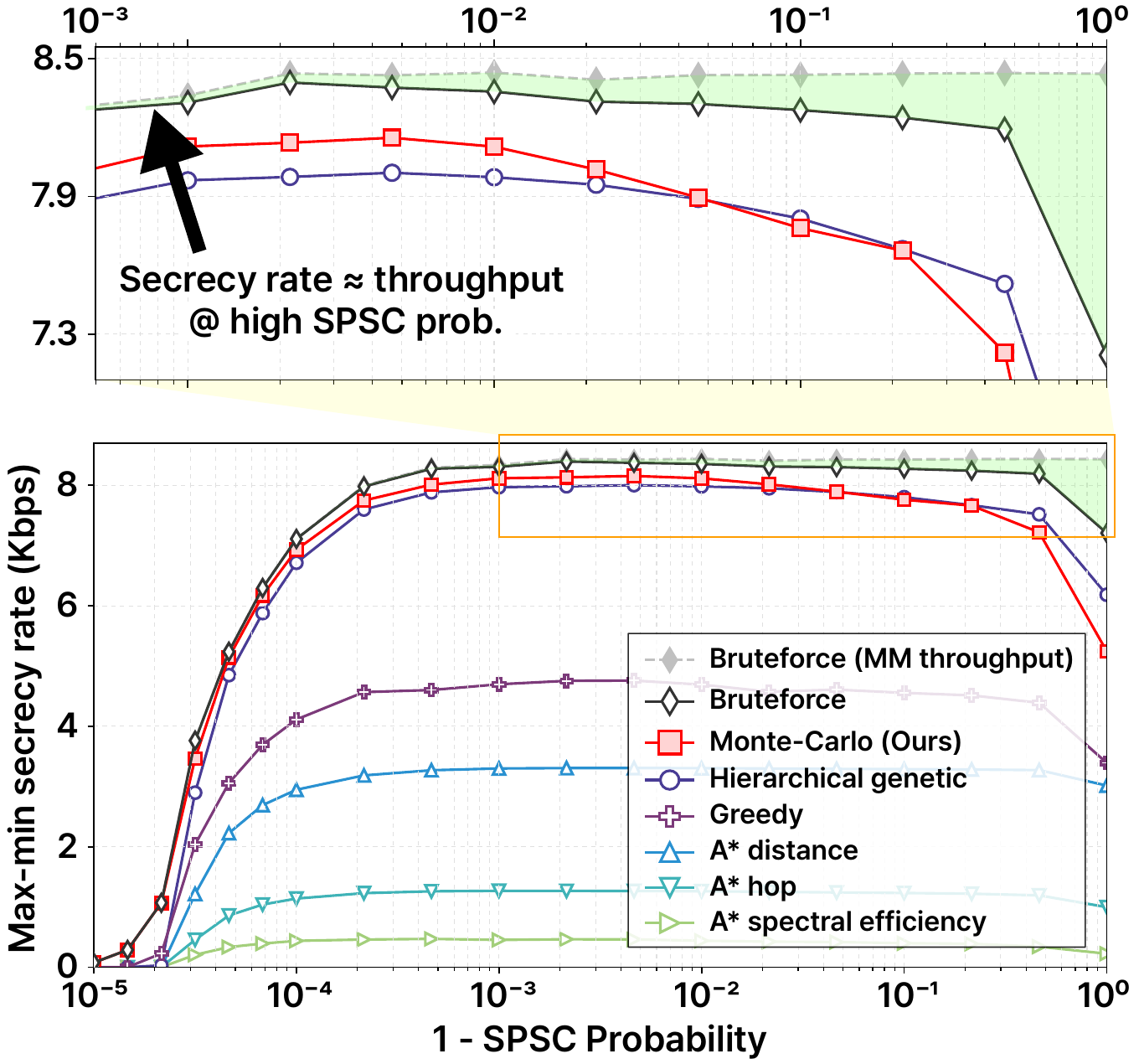}
  \vspace{-0.5cm}
  \caption{Comparison of max-min throughput and secrecy rate versus SPSC outage probability \hl{with the fixed Eve density of $(\lambda_1, \lambda_2, \lambda_3, \lambda_4) =
(10^{-3}, 2{\cdot}10^{-3}, 3{\cdot}10^{-4}, 10^{-4})$.}
  The lower graph plots the max-min secrecy rates across the SPSC probabilities, while the upper graph magnifies the gap (\textcolor[HTML]{44AC48}{\textbf{Green area}}) between the max-min throughput and secrecy rate for the Bruteforce solution.
  }
  \label{fig:mmsr_spsc}
\end{figure}

\begin{figure}[htb]
  \centering
  \includegraphics[width=\columnwidth]{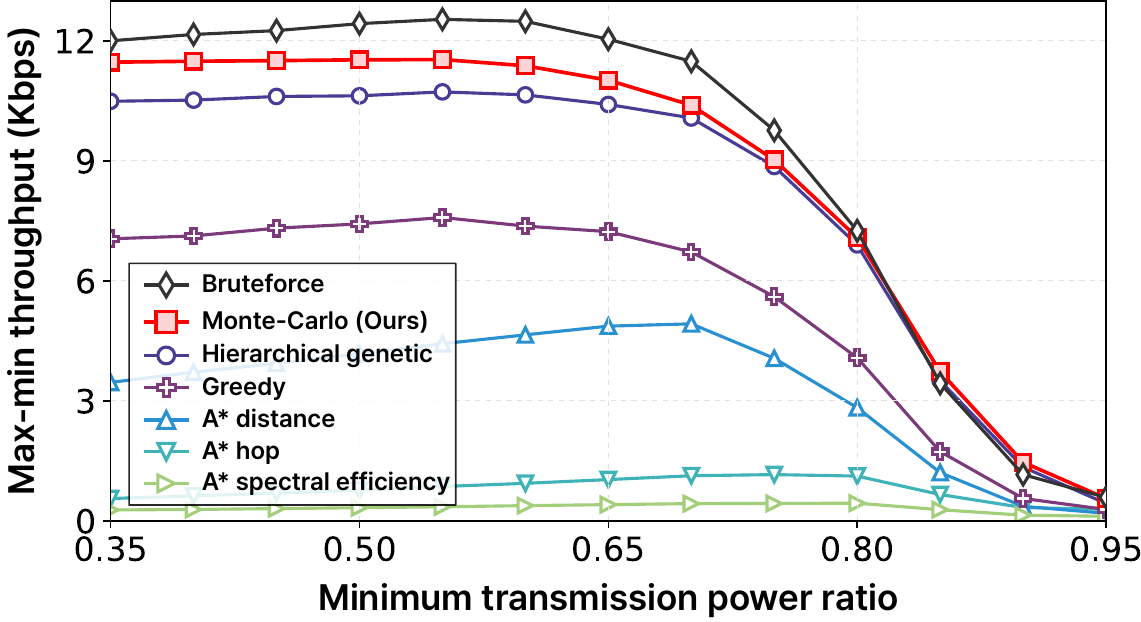}
  \vspace{-.4cm}
  \caption{Max–min throughput versus minimum transmit power ratio. \hl{The SPSC probability is set as $\tau=0.9999$ and the Eve density is fixed as $(\lambda_1, \lambda_2, \lambda_3, \lambda_4) =
(10^{-3}, 2{\cdot}10^{-3}, 3{\cdot}10^{-4}, 10^{-4})$.}}
  \label{fig:mmf_power}
\end{figure}

\paragraph{SAGSIN configurations} 
The SAGSIN map is generated considering realistic geography as illustrated in Sec.~\ref{subsec:demo},
randomly placing $150$ ground base stations, $150$ maritime base
stations, $12$ HAPs stations, and $10$ LEO satellites within
longitude $30^\circ$ and latitude $20^\circ$ to serve $60$ users.
The default Eve density is set as
$(\lambda_1, \lambda_2, \lambda_3, \lambda_4) =
(10^{-3}, 2{\cdot}10^{-3}, 3{\cdot}10^{-4}, 10^{-4})$
following the literature~\cite{Zou16-EveDensity}.
Physical layer parameters for SAGSIN, such as carrier frequency,
total bandwidth, antenna gain, and antenna gain-to-noise-temperature,
are configured as in Table~\ref{tab:simulation_parameters},
referring to~\cite{Liu22-Hanzo, Vondra18-IEEENetwork} and the 3GPP standard~\cite{3gpp.38.821}.

\begin{table}[htb]
    \centering
    \caption{Simulation Parameters}
    \label{tab:simulation_parameters}
    \begin{tabularx}{\if 1\doublecolumn 1 \else .62\fi \columnwidth}{@{\hspace{2.3mm}}c@{\hspace{2.3mm}}c@{\hspace{2.3mm}}c@{\hspace{2.3mm}}}
        \toprule
        Parameter (Unit)          & Ground (G), Maritime (M), HAPs (H)         & LEO               \\
        \cmidrule(l{0pt}r{2pt}){1-1} \cmidrule(l{2pt}r{2pt}){2-3}
        \arrayrulecolor{lightgray}
         Carrier frequency (GHz) & 14 & 20 \\
        \cmidrule(l{0pt}r{2pt}){1-1} \cmidrule(l{2pt}r{2pt}){2-3}
         Total bandwidth (MHz) & 250 & 400 \\
        \cmidrule(l{0pt}r{2pt}){1-1} \cmidrule(l{2pt}r{2pt}){2-3}
         Tx power (dBm) & 30 & 21.5 \\
        \cmidrule(l{0pt}r{2pt}){1-1} \cmidrule(l{2pt}r{2pt}){2-3}
         Tx antenna gain (dBi) &
         \begin{tabular}{@{}c@{}} 43.2 (G,M,H$\rightarrow$\hl{LEO}) \\  25 (G,M,H$\rightarrow$G,M,H)\end{tabular} &
         38.5 \\
        \cmidrule(l{0pt}r{2pt}){1-1} \cmidrule(l{2pt}r{2pt}){2-3}
         Rx antenna gain (dBi) &
         \begin{tabular}{@{}c@{}} 39.7 (G,M,H$\rightarrow$\hl{LEO}) \\ 25 (G,M,H$\rightarrow$G,M,H) \end{tabular} &
         38.5 \\
        \cmidrule(l{0pt}r{2pt}){1-1} \cmidrule(l{2pt}r{2pt}){2-3}
        \begin{tabular}{@{}c@{}} Antenna gain-to-noise \\ temperature (dB/K) \end{tabular} &
        \begin{tabular}{@{}c@{}} 1.5 (H$\rightarrow$LEO), 16.2 (H$\rightarrow$G,M,H) \\ 1.2 (G,M$\rightarrow$LEO), 15.9 (G,M$\rightarrow$G,M,H) \end{tabular} &
        13 \\
        \cmidrule(l{0pt}r{2pt}){1-1} \cmidrule(l{2pt}r{2pt}){2-3}
        Pathloss exponent & 2.8 (G), 2.7 (M), 2.6 (H) & 2.4 \\
        \arrayrulecolor{black}
        \bottomrule
    \end{tabularx}
\end{table}

\subsubsection{Impact of the SPSC probability threshold}

Fig.~\ref{fig:mmf_spsc} illustrates how the max-min throughput varies as the SPSC outage probability ($1-\tau$) increases from $10^{-5}$ to $10^{-2}$.
The minimum transmit power $P_i$ for each base station is set to $80\%$ of its total available power.
Figs.~\ref{fig:mmf_spsc}a and~\ref{fig:mmf_spsc}c are obtained by scaling the Eve densities of ground/maritime nodes; and HAPs/LEO nodes fivefold, relative to Fig.~\ref{fig:mmf_spsc}b.

The minimum throughput of all schemes converges to zero as $1-\tau$ approaches zero. 
The stricter SPSC threshold $\tau$ reduces the maximum link distance \eqref{eq:maximum_link_distance}, making some users unserviceable. 
Across all subfigures, the \textbf{Monte-Carlo} scheme consistently attains throughput levels within ${\approx}5\%$ of the optimal values from the \textbf{Bruteforce} method, demonstrating its efficiency and near-optimal performance.

In Figs.~\ref{fig:mmf_spsc}a and~\ref{fig:mmf_spsc}c, we note that the increase in the Eve density of HAPs and LEO notably deteriorate the max-min throughput, inducing a long tail near zero throughput. 
This result verifies the critical role of HAPs and LEO nodes in forming secure relays in SAGSINs, as they can establish long-distance connections more easily due to their low path loss and line-of-sight characteristics.


\subsubsection{Impact of Eve density}

Fig.~\ref{fig:mmf_density} shows how the max-min throughput changes when Eve densities in the various SAGSIN layers vary.
The SPSC probability threshold $\tau$ is fixed at $99.99\%$. 
In all subfigures, the max-min throughput improves as the Eve density decreases. 
This improvement is attributed to the expansion of the feasible region defined by the constraints in $\Problem{1}$, thereby providing a broader set of relay options to achieve higher user throughput.

Nevertheless, as shown in Figs.~\ref{fig:mmf_density}a, b, and e, the max-min throughputs of ground and maritime nodes exhibit only modest throughput improvement relative to those of HAPs and LEO nodes. Although the large number of ground and maritime nodes expands the search space and theoretically offers more routing possibilities, it also significantly increases algorithmic complexity and reduces the likelihood of reaching the global optimum. 
Consequently, the additional computational burden offsets the expected throughput gain due to increased search complexity, leading to only modest improvements.
In Figs.~\ref{fig:mmf_density}d, f, and g, the most significant throughput increases are observed when reducing the Eve density at the LEO layer, as LEO nodes enable source-to-user connections with fewer hops and higher spectral efficiency through inter-satellite routes.

\subsubsection{Throughput-secrecy gap}

Fig.~\ref{fig:mmsr_spsc} presents a comparative analysis between the max-min secrecy rate and the max-min throughput across SPSC probability thresholds. 
As the threshold $\tau$ increases, the two metrics exhibit a strong convergence, highlighting their equivalence under stringent security constraints. In contrast, in the regime of low SPSC thresholds, a noticeable performance gap occurs, primarily because the max-min throughput formulation does not intrinsically incorporate relay security considerations. 
However, maintaining a high SPSC probability (e.g., $>0.95$) is imperative to effectively mitigate the risk of eavesdropping by adversarial nodes such as Eve \cite{Yao18-TWC}.
Under the high-security regime, where the SPSC probability exceeds a predefined reliability threshold, the maximum throughput-secrecy gap is approximately 0.38\% at $1-\tau=10^{-3}$, which is marginal enough to justify the use of max-min throughput metric in highly secure scenarios.


\begin{figure*}[bht]
  \centering
  \includegraphics[width=\textwidth]{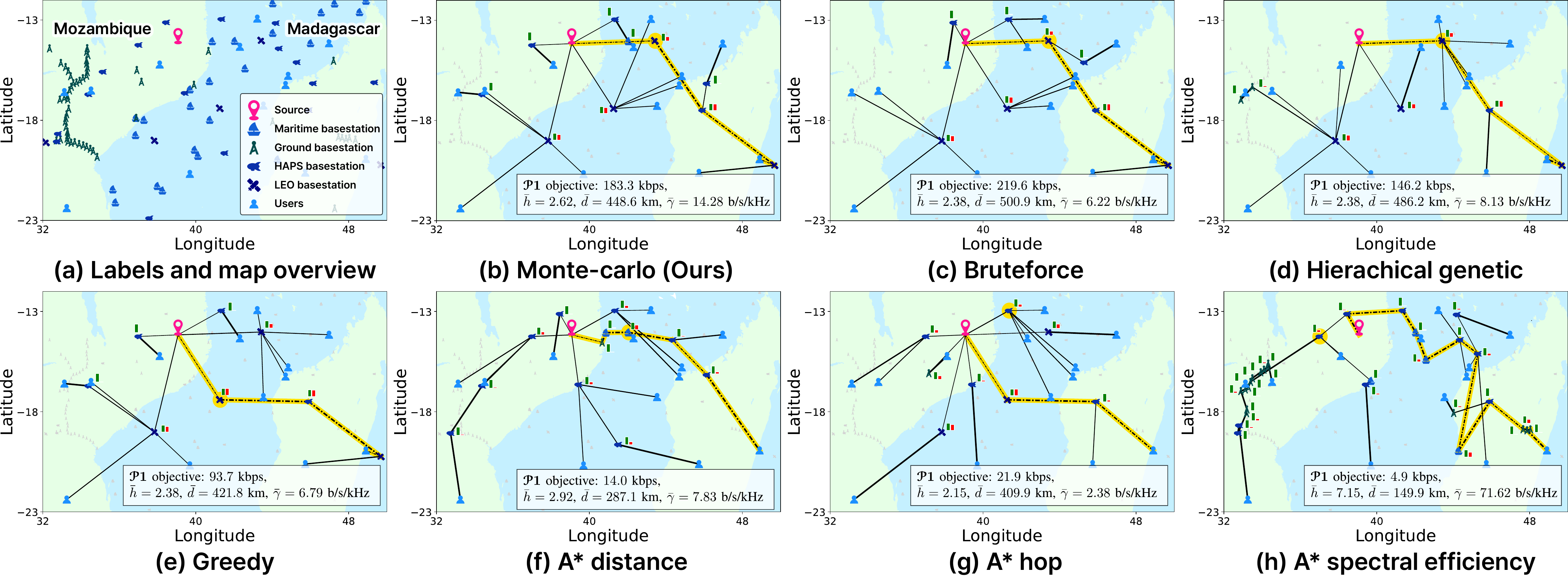}
  \vspace{-.5cm}
  \caption{Secure routing examples on the Mozambique-Madagascar testbed. 
  The highlighted dashed line represents the longest path. The highlighted node represents the min-throughput node.
  $\bar{h}$, $\bar{d}$, and $\bar{\gamma}$ represent average hop, link distance, and link spectral efficiency, respectively. The green and red bars adjacent to each node indicate the normalized transmission and jamming power allocated under the minimum transmission power ratio $P_i/P = 0.8$. The line width of each link represents the amount of allocated bandwidth.}
  \label{fig:real_demo}
\end{figure*}

\begin{figure*}[htb]
  \centering
  \includegraphics[width=\textwidth]{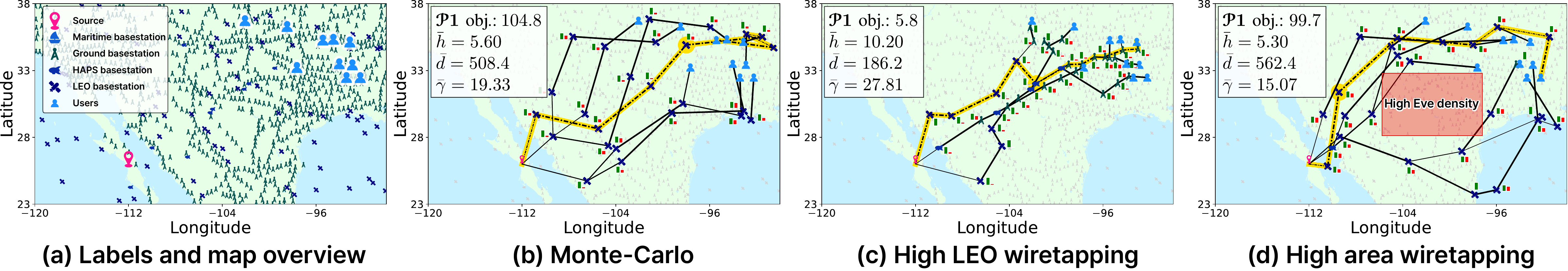}
  \vspace{-.5cm}
  \caption{
  Secure routing solutions on the Southern North America testbed. (a) Moderate uniform Eve density; (b) High Eve density in the LEO layer; (c) Localized high-density region (\textcolor[HTML]{EE2928}{\textbf{red area}}). Eve density in (b) and (c) is increased tenfold. Markers and annotations follow those in Fig.~\ref{fig:real_demo}.}
  \label{fig:real_demo_us}
\end{figure*}

\subsubsection{Impact of minimum transmit-power constraint}
Fig.~\ref{fig:mmf_power} examines how the max–min throughput varies as the minimum transmission power ratio $P^{\min}_i/P_i^{\max}$ changes.
For $P^{\min}_i/P_i^{\max}<0.65$, each node can flexibly allocate its transmission and jamming powers according to \eqref{eq:optimal_rho_sigma}, thereby maintaining a nearly constant achievable throughput.  

This finding suggests a practical design guideline as follows:
As $P_i^{\min}$ increases, the maximum available jamming budget gradually decreases, which shortens the admissible link distance under the SPSC constraint and excessively limits the feasible graph domain.  
Selecting $P^{\min}_i/P_i^{\max}$ within the range of $0.6$-$0.7$ ensures adequate jamming capability while eliminating non-promising links, thereby reducing the search space without compromising performance.

\subsection{Demonstrations on Real-World Data Testbeds}\label{subsec:demo}
We evaluate and visualize the framework in testbeds built from real-world base station data to understand scheme operations and framework behavior under secure environment changes.
We construct two testbeds in the Mozambique-Madagascar Channel and Southern North America.
This work is the first to establish \textbf{a SAGSIN testbed with a real-world dataset} integrating HAP base stations.
\hl{These testbeds reflect the large-scale HAP mobility patterns \cite{bellemare2020autonomous}}.\footnote{
Since no operational datasets for HAP base stations are currently available, the HAP nodes were synthesized with temporal adjustments.
}

\paragraph{Data preparation} The testbeds incorporate various network elements from multiple data sources, as detailed below:
\begin{itemize}
    \item \textbf{Ground.}
    Ground node positions are extracted from the open source project \href{https://www.opencellid.org}{\textsc{OpenCellID}} database, using a network snapshot at 2025-04-22 13:00\,UTC.
    \item \textbf{Maritime.}
    Maritime nodes for the Mozambique-Madagascar testbed are parsed from the \href{https://www.marinetraffic.com}{\textsc{MarineTraffic}} database at 2025-04-22 13:00\,UTC.
    \item \textbf{HAPs.}
    HAPs node positions are derived from the non-profit organization \href{https://stratocat.com.ar}{\textsc{Stratocat}} database based on data provided by Google's Project Loon \cite{bellemare2020autonomous}.
    Snapshots at 2020-09-28 10:00\,UTC for Mozambique-Madagascar; and at 2020-07-29 00:00\,UTC for Southern North America are used to represent HAPs deployments.
    \item \textbf{LEO satellites.} LEO node coordinates are extracted from the non-profit organization \href{https://celestrak.org}{\textsc{CelesTrak}} repository, parsing Starlink satellites at 2025-04-22 13:00\,UTC.
\end{itemize}
\hl{From the positional information obtained from the dataset, 
we generate channels as described in the system model using the simulation parameters in Table.~\ref{tab:simulation_parameters}.
}

\subsubsection{Mozambique-Madagascar Testbed}
Fig.~\ref{fig:real_demo} overlays the secure routes selected in the Mozambique–Madagascar testbed.  
In Fig.~\ref{fig:real_demo}b, the \textbf{Monte-Carlo} scheme consistently finds the “sweet‐spot” of two or three hops of a few hundred kilometers each, balancing the throughput penalty of extra relay hops with the signal loss incurred on longer links.  

The three A$^*$ metrics provide useful insights, but optimizing only one axis inevitably violates the inherent multi-dimensional trade-offs.
\textbf{A$^*$ hop} forces overly long links that suffer severe SNR penalties, while \textbf{A$^*$ spectral efficiency} fragments the path into many short hops and incur excessive scheduling overhead.  
These observations reveal that effective secure routing in SAGSIN must jointly account for hop count, link distance, and hop capacity.  

\subsubsection{Southern North America testbed}
Fig.~\ref{fig:real_demo_us} shows how the proposed MCRR adapts to varying Eve distributions in the Southern North America region.
By computing each link’s maximum allowable distance from Eve density and the SPSC threshold, the framework automatically adjusts routing when (i) density increases on a particular network layer (Fig.~\ref{fig:real_demo_us}c) and (ii) Eves concentrate in a specific geographic area (Fig.~\ref{fig:real_demo_us}d).
Thus, we can observe:
\begin{itemize}
  \item \textbf{Fig.~\ref{fig:real_demo_us}b:} When Eve density is moderate across all layers, the \textbf{Monte-Carlo} scheme adopts inter-satellite paths that maintain throughput in the hundreds of kbps.
  \item \textbf{Fig.~\ref{fig:real_demo_us}c:} If Eve density increases in a LEO network, the links in the network become infeasible. Then Alg.~\ref{alg:MCRR} detours through ground and HAPs nodes.
  \item \textbf{Fig.~\ref{fig:real_demo_us}d:} When Eves concentrate in the shaded area, links crossing that region are automatically excluded and replaced by detours around the area.
\end{itemize}
In summary, incorporating distance constraints computed from the Eve density and SPSC threshold into the optimization process enables routing decisions to promptly adapt to variations in node-specific and geographical secure threats.

\section{Discussion}
\paragraph{\hl{Conclusion}} This work demonstrates that secure and high-performance multi-hop communication can be achieved even in the presence of unknown eavesdroppers across large-scale SAGSINs.  
This achievement is made possible by deriving a closed-form expression for the SPSC probability and integrating it into a cross-layer optimization framework that jointly optimizes radio resources and relay routes.  
This framework features an $\mathcal{O}(1)$-complexity frequency allocation and power splitting strategy, along with a Monte-Carlo relay routing algorithm that ensures a minimum throughput for each user under security constraints.  
The proposed framework was validated through a real-world testbed incorporating ground, maritime, HAP, and LEO nodes, marking the first SAGSIN testbed that includes HAP base stations.  
This validation narrows the gap between theoretical models and practical deployment, offering promising insights into the realization of secure communication in future 6G integrated networks.

\paragraph{\hl{Limitations and Future Work}}
\hl{
While this paper proposes a novel approach to physical-layer secure routing in SAGSINs, several challenges remain open. 
First, the closed-form derivation of the SPSC probability in \eqref{eq:SPSC_final_approx_2} assumes Rayleigh fading. 
As links in SAGSINs can be LoS, particularly in the space and aerial layers, a derivation under Rician or Nakagami-$m$ fading models would be more appropriate, but remains an unsolved problem. 
Furthermore, challenges involving cross-layer or active attacks in SAGSINs, such as satellite jamming or ship-to-air interception, still need to be addressed. 
Although this work establishes a new paradigm for secure routing against unknown passive eavesdroppers, the system model remains limited to this threat type.
Thus, designing robust protocols against active and mobility-driven attacks in SAGSINs remains a key open problem.}

\section*{Acknowledgment}
We appreciate Prof. Lajos Hanzo for the valuable discussions on dataset construction and for kindly sharing the related research materials on SAGSINs.

\bibliographystyle{IEEEtran}
\bibliography{references.bib}

\begin{thebibliography}{10}
\providecommand{\url}[1]{#1}
\csname url@samestyle\endcsname
\providecommand{\newblock}{\relax}
\providecommand{\bibinfo}[2]{#2}
\providecommand{\BIBentrySTDinterwordspacing}{\spaceskip=0pt\relax}
\providecommand{\BIBentryALTinterwordstretchfactor}{4}
\providecommand{\BIBentryALTinterwordspacing}{\spaceskip=\fontdimen2\font plus
\BIBentryALTinterwordstretchfactor\fontdimen3\font minus \fontdimen4\font\relax}
\providecommand{\BIBforeignlanguage}[2]{{%
\expandafter\ifx\csname l@#1\endcsname\relax
\typeout{** WARNING: IEEEtran.bst: No hyphenation pattern has been}%
\typeout{** loaded for the language `#1'. Using the pattern for}%
\typeout{** the default language instead.}%
\else
\language=\csname l@#1\endcsname
\fi
#2}}
\providecommand{\BIBdecl}{\relax}
\BIBdecl

\bibitem{Kodheli21-SurvTut}
O.~Kodheli, E.~Lagunas, N.~Maturo, S.~K. Sharma, B.~Shankar, J.~F.~M. Montoya, J.~C.~M. Duncan, D.~Spano, S.~Chatzinotas, S.~Kisseleff, J.~Querol, L.~Lei, T.~X. Vu, and G.~Goussetis, ``Satellite communications in the new space era: A survey and future challenges,'' \emph{{IEEE} Commun. Surveys Tuts.}, vol.~23, no.~1, pp. 70--109, 2021.

\bibitem{Bellemare2020-nature}
\BIBentryALTinterwordspacing
M.~G. Bellemare, S.~Candido, P.~S. Castro, J.~Gong, M.~C. Machado, S.~Moitra, S.~S. Ponda, and Z.~Wang, ``Autonomous navigation of stratospheric balloons using reinforcement learning,'' \emph{Nature}, vol. 588, no. 7836, pp. 77--82, December 2020. [Online]. Available: \url{https://doi.org/10.1038/s41586-020-2939-8}
\BIBentrySTDinterwordspacing

\bibitem{Elamassie24-TVT}
M.~Elamassie and M.~Uysal, ``{FSO}-based multi-layer airborne backhaul networks,'' \emph{{IEEE} Trans. Veh. Technol.}, vol.~73, no.~10, pp. 15\,004--15\,019, 2024.

\bibitem{Li25-arxiv}
\BIBentryALTinterwordspacing
Y.~Li, J.~Xu, N.~Khachatrian, and M.-S. Alouini, ``Implementation of an ieee 802.11ax-based maritime mesh network in the red sea,'' 2025. [Online]. Available: \url{https://arxiv.org/abs/2502.13559}
\BIBentrySTDinterwordspacing

\bibitem{Zhang24-COMMAG}
J.~Zhang, S.~Wang, Z.~Ma, G.~Gao, Y.~Guo, F.~Zhang, S.~Huang, and J.~Zhang, ``Long-term and real-time high-speed underwater wireless optical communications in deep sea,'' \emph{{IEEE} Commun. Mag.}, vol.~62, no.~3, pp. 96--101, 2024.

\bibitem{Meng25-SurvTut}
S.~Meng, S.~Wu, J.~Zhang, J.~Cheng, H.~Zhou, and Q.~Zhang, ``Semantics-empowered space-air-ground-sea integrated network: New paradigm, frameworks, and challenges,'' \emph{{IEEE} Commun. Surveys Tuts.}, vol.~27, no.~1, pp. 140--183, 2025.

\bibitem{Guo22-SurvTut}
H.~Guo, J.~Li, J.~Liu, N.~Tian, and N.~Kato, ``A survey on space-air-ground-sea integrated network security in 6{G},'' \emph{{IEEE} Commun. Surveys Tuts.}, vol.~24, no.~1, pp. 53--87, 2022.

\bibitem{IMT2030}
\BIBentryALTinterwordspacing
{ITU-R WP5D}, ``{M.2160:} framework and overall objectives of the future development of {IMT} for 2030 and beyond,'' {ITU Radiocommunication Sector (ITU-R)}, {ITU-R recommendations}, Nov. 2023. [Online]. Available: \url{https://www.itu.int/rec/R-REC-M.2160/en}
\BIBentrySTDinterwordspacing

\bibitem{3gpp.23.700-10}
\BIBentryALTinterwordspacing
{3rd Generation Partnership Project (3GPP)}, ``{Study on {5GC} Enhancements for Non-Terrestrial Networks ({NTN})},'' 3GPP, Technical Report 3GPP TR 23.700-10, October 2023. [Online]. Available: \url{https://www.3gpp.org/ftp/Specs/archive/23_series/23.700-10/}
\BIBentrySTDinterwordspacing

\bibitem{3gpp.26.501}
\BIBentryALTinterwordspacing
------, ``{5{G} Media Streaming – Support for Non-Terrestrial Networks ({NTN})},'' 3GPP, Technical Specification 3GPP TS 26.501, January 2024. [Online]. Available: \url{https://www.3gpp.org/ftp/Specs/archive/26_series/26.501/}
\BIBentrySTDinterwordspacing

\bibitem{3gpp.38.821}
\BIBentryALTinterwordspacing
------, ``{Solutions for {NR} to support {Non-Terrestrial Networks} (NTN) - Enhancements in Release 18},'' 3GPP, Technical Report 3GPP TR 38.821, February 2024. [Online]. Available: \url{https://www.3gpp.org/ftp/Specs/archive/38_series/38.821/}
\BIBentrySTDinterwordspacing

\bibitem{Wang23-TWC}
Q.~Wang, H.~Wang, W.~Sun, N.~Zhao, H.-N. Dai, and W.~Zhang, ``Aerial bridge: A secure tunnel against eavesdropping in terrestrial-satellite networks,'' \emph{{IEEE} Trans. Wireless Commun.}, vol.~22, no.~11, pp. 8096--8113, 2023.

\bibitem{Han22-TCOM}
C.~Han, L.~Bai, T.~Bai, and J.~Choi, ``Joint {UAV} deployment and power allocation for secure space-air-ground communications,'' \emph{{IEEE} Trans. Commun.}, vol.~70, no.~10, pp. 6804--6818, 2022.

\bibitem{Zhang24-IOTJ}
Y.~Zhang, X.~Gao, H.~Yuan, K.~Yang, J.~Kang, P.~Wang, and D.~Niyato, ``Joint {UAV} trajectory and power allocation with hybrid {FSO}/{RF} for secure space–air–ground communications,'' \emph{{IEEE} Internet Things J.}, vol.~11, no.~19, pp. 31\,407--31\,421, 2024.

\bibitem{Li23-CL}
H.~Li, J.~Li, M.~Liu, and F.~Gong, ``{UAV}-assisted secure communication for coordinated satellite-terrestrial networks,'' \emph{{IEEE} Commun. Lett.}, vol.~27, no.~7, pp. 1709--1713, 2023.

\bibitem{Wang23-GLOBECOM}
Z.~Wang, Z.~Yin, X.~Wang, N.~Cheng, Y.~Zhang, and T.~H. Luan, ``Label-free deep learning driven secure access selection in space-air-ground integrated networks,'' in \emph{IEEE Global Commun. Conf. (GLOBECOM)}, 2023, pp. 958--963.

\bibitem{kakati25-OJCOMS}
A.~Kakati, G.~Li, E.~Moustapha~Diallo, L.~Chiru~Kawala, N.~Hussain, and A.~B.~M. Adam, ``Toward proactive, secure and efficient space-air-ground communications: Generative {AI}-based {DRL} framework,'' \emph{{IEEE} Open J. Commun. Soc.}, vol.~6, pp. 1284--1298, 2025.

\bibitem{Sun22-TCOM}
G.~Sun, J.~Li, A.~Wang, Q.~Wu, Z.~Sun, and Y.~Liu, ``Secure and energy-efficient {UAV} relay communications exploiting collaborative beamforming,'' \emph{{IEEE} Trans. Commun.}, vol.~70, no.~8, pp. 5401--5416, 2022.

\bibitem{Nguyen23-TAES}
T.~N. Nguyen, T.~Van~Chien, D.-H. Tran, V.-D. Phan, M.~Voznak, S.~Chatzinotas, Z.~Ding, and H.~V. Poor, ``Security-reliability tradeoffs for satellite–terrestrial relay networks with a friendly jammer and imperfect {CSI},'' \emph{{IEEE} Trans. Aerosp. Electron. Syst.}, vol.~59, no.~5, pp. 7004--7019, 2023.

\bibitem{Dai22-IOTM}
H.-N. Dai, Y.~Wu, M.~Imran, and N.~Nasser, ``Integration of blockchain and network softwarization for space-air-ground-sea integrated networks,'' \emph{IEEE Internet Things Mag.}, vol.~5, no.~1, pp. 166--172, 2022.

\bibitem{Xu24-WCOM}
P.~Xu, G.~Chen, J.~Quan, C.~Huang, I.~Krikidis, K.-K. Wong, and C.-B. Chae, ``Deep learning driven buffer-aided cooperative networks for {B5G/6G}: Challenges, solutions, and future opportunities,'' \emph{{IEEE} Wireless Commun.}, vol.~31, no.~4, pp. 215--222, 2024.

\bibitem{Dong10-TSP}
L.~Dong, Z.~Han, A.~P. Petropulu, and H.~V. Poor, ``Improving wireless physical layer security via cooperating relays,'' \emph{{IEEE} Trans. Signal Processing}, vol.~58, no.~3, pp. 1875--1888, 2010.

\bibitem{Su22-Sensors}
Y.~Su, L.~Li, R.~Fan, Y.~Liu, and Z.~Jin, ``A secure transmission scheme with energy-efficient cooperative jamming for underwater acoustic sensor networks,'' \emph{IEEE Sensors J.}, vol.~22, no.~21, pp. 21\,287--21\,298, 2022.

\bibitem{Abdalla23-TVT}
A.~S. Abdalla, A.~Behfarnia, and V.~Marojevic, ``{UAV} trajectory and multi-user beamforming optimization for clustered users against passive eavesdropping attacks with unknown {CSI},'' \emph{{IEEE} Trans. Veh. Technol.}, vol.~72, no.~11, pp. 14\,426--14\,442, 2023.

\bibitem{Eiza23-WCNC}
M.~H. Eiza and A.~Raschellà, ``A hybrid {SDN}-based architecture for secure and {QoS} aware routing in space-air-ground integrated networks ({SAGIN}s),'' in \emph{IEEE Wireless Commun. Networking Conf. (WCNC)}, 2023, pp. 1--6.

\bibitem{Luo24-TNSM}
H.~Luo, Y.~Wu, G.~Sun, H.~Yu, and M.~Guizani, ``{ESCM}: An efficient and secure communication mechanism for {UAV} networks,'' \emph{{IEEE} Trans. Netw. Serv. Manag.}, vol.~21, no.~3, pp. 3124--3139, 2024.

\bibitem{Sbeiti16-TWC}
M.~Sbeiti, N.~Goddemeier, D.~Behnke, and C.~Wietfeld, ``{PASER}: Secure and efficient routing approach for airborne mesh networks,'' \emph{{IEEE} Trans. Wireless Commun.}, vol.~15, no.~3, pp. 1950--1964, 2016.

\bibitem{Guo18-Access}
K.~Guo, K.~An, B.~Zhang, Y.~Huang, and D.~Guo, ``Physical layer security for hybrid satellite terrestrial relay networks with joint relay selection and user scheduling,'' \emph{IEEE Access}, vol.~6, pp. 55\,815--55\,827, 2018.

\bibitem{Yao16-TCOM}
J.~Yao, S.~Feng, X.~Zhou, and Y.~Liu, ``Secure routing in multihop wireless ad-hoc networks with decode-and-forward relaying,'' \emph{{IEEE} Trans. Commun.}, vol.~64, no.~2, pp. 753--764, 2016.

\bibitem{Yao18-CL}
J.~Yao and Y.~Liu, ``Secrecy rate maximization with outage constraint in multihop relaying networks,'' \emph{{IEEE} Commun. Lett.}, vol.~22, no.~2, pp. 304--307, 2018.

\bibitem{Yao18-TWC}
J.~Yao, X.~Zhou, Y.~Liu, and S.~Feng, ``Secure transmission in linear multihop relaying networks,'' \emph{{IEEE} Trans. Wireless Commun.}, vol.~17, no.~2, pp. 822--834, 2018.

\bibitem{Xu21-TWC}
Y.~Xu, J.~Liu, Y.~Shen, X.~Jiang, Y.~Ji, and N.~Shiratori, ``{QoS}-aware secure routing design for wireless networks with selfish jammers,'' \emph{{IEEE} Trans. Wireless Commun.}, vol.~20, no.~8, pp. 4902--4916, 2021.

\bibitem{Xu21-IoTJ}
Y.~Xu, J.~Liu, Y.~Shen, J.~Liu, X.~Jiang, and T.~Taleb, ``Incentive jamming-based secure routing in decentralized internet of things,'' \emph{{IEEE} Internet Things J.}, vol.~8, no.~4, pp. 3000--3013, 2021.

\bibitem{Cover79-TIT}
T.~Cover and A.~Gamal, ``Capacity theorems for the relay channel,'' \emph{{IEEE} Trans. Inform. Theory}, vol.~25, no.~5, pp. 572--584, 1979.

\bibitem{Gupta00-TIT}
P.~Gupta and P.~Kumar, ``The capacity of wireless networks,'' \emph{{IEEE} Trans. Inform. Theory}, vol.~46, no.~2, pp. 388--404, 2000.

\bibitem{Goel08-TWC}
S.~Goel and R.~Negi, ``Guaranteeing secrecy using artificial noise,'' \emph{{IEEE} Trans. Wireless Commun.}, vol.~7, no.~6, pp. 2180--2189, 2008.

\bibitem{LV20-TCOM}
L.~Lv, H.~Jiang, Z.~Ding, L.~Yang, and J.~Chen, ``Secrecy-enhancing design for cooperative downlink and uplink {NOMA} with an untrusted relay,'' \emph{{IEEE} Trans. Commun.}, vol.~68, no.~3, pp. 1698--1715, 2020.

\bibitem{Chen22-TVT}
B.~Chen, R.~Li, Q.~Ning, K.~Lin, C.~Han, and V.~C. Leung, ``Security at physical layer in {NOMA} relaying networks with cooperative jamming,'' \emph{{IEEE} Trans. Veh. Technol.}, vol.~71, no.~4, pp. 3883--3888, 2022.

\bibitem{Mao24-WCL}
Y.~Mao, H.~Zhao, W.~Li, and S.~Shao, ``Joint transceiver optimization for {CJ}-aided security communication systems with frequency mismatch,'' \emph{{IEEE} Wireless Commun. Lett.}, vol.~13, no.~6, pp. 1631--1635, 2024.

\bibitem{Jameel19-SurvTut}
F.~Jameel, S.~Wyne, G.~Kaddoum, and T.~Q. Duong, ``A comprehensive survey on cooperative relaying and jamming strategies for physical layer security,'' \emph{{IEEE} Commun. Surveys Tuts.}, vol.~21, no.~3, pp. 2734--2771, 2019.

\bibitem{Kong18-TVT}
L.~Kong, S.~Vuppala, and G.~Kaddoum, ``Secrecy analysis of random {MIMO} wireless networks over $\alpha$-$\mu$ fading channels,'' \emph{{IEEE} Trans. Veh. Technol.}, vol.~67, no.~12, pp. 11\,654--11\,666, 2018.

\bibitem{Tolossa18-systemsJ}
Y.~J. Tolossa, S.~Vuppala, G.~Kaddoum, and G.~Abreu, ``On the uplink secrecy capacity analysis in {D2D}-enabled cellular network,'' \emph{IEEE Systems J.}, vol.~12, no.~3, pp. 2297--2307, 2018.

\bibitem{Bhargav16-TCOM}
N.~Bhargav, S.~L. Cotton, and D.~E. Simmons, ``Secrecy capacity analysis over $\kappa $ – $\mu $ fading channels: Theory and applications,'' \emph{{IEEE} Trans. Commun.}, vol.~64, no.~7, pp. 3011--3024, 2016.

\bibitem{Pagin22-TWC}
M.~Pagin, T.~Zugno, M.~Polese, and M.~Zorzi, ``Resource management for {5G} {NR} integrated access and backhaul: A semi-centralized approach,'' \emph{{IEEE} Trans. Wireless Commun.}, vol.~21, no.~2, pp. 753--767, 2022.

\bibitem{Yin22-TCOM}
H.~Yin, S.~Roy, and L.~Cao, ``Routing and resource allocation for {IAB} multi-hop network in {5G} advanced,'' \emph{{IEEE} Trans. Commun.}, vol.~70, no.~10, pp. 6704--6717, 2022.

\bibitem{3gpp.38.401}
\BIBentryALTinterwordspacing
{3rd Generation Partnership Project (3GPP)}, ``{NG-RAN; Architecture description},'' {3rd Generation Partnership Project (3GPP)}, Tech. Rep. TS 38.401, March 2023. [Online]. Available: \url{https://www.3gpp.org/ftp/Specs/archive/38_series/38.401/}
\BIBentrySTDinterwordspacing

\bibitem{3gpp.38.340}
\BIBentryALTinterwordspacing
------, ``{NR; Integrated Access and Backhaul (IAB) architecture and procedures},'' {3rd Generation Partnership Project (3GPP)}, Tech. Rep. TS 38.340, June 2023. [Online]. Available: \url{https://www.3gpp.org/ftp/Specs/archive/38_series/38.340/}
\BIBentrySTDinterwordspacing

\bibitem{Thapar20-TVT}
S.~Thapar, D.~Mishra, and R.~Saini, ``Novel outage-aware {NOMA} protocol for secrecy fairness maximization among untrusted users,'' \emph{{IEEE} Trans. Veh. Technol.}, vol.~69, no.~11, pp. 13\,259--13\,272, 2020.

\bibitem{Xu16-ICC}
Y.~Xu, J.~Liu, Y.~Shen, X.~Jiang, and T.~Taleb, ``Security/{QoS}-aware route selection in multi-hop wireless ad hoc networks,'' in \emph{IEEE Int. Conf. on Commun. (ICC)}, 2016, pp. 1--6.

\bibitem{Xu17-WCNC}
Y.~Xu, J.~Liu, O.~Takahashi, N.~Shiratori, and X.~Jiang, ``{SOQR}: Secure optimal {QoS} routing in wireless ad hoc networks,'' in \emph{IEEE Wireless Commun. Netw. Conf. (WCNC)}, 2017, pp. 1--6.

\bibitem{lyu2025-TWC}
H.~Lyu, J.~Jang, H.~Lee, and H.~Jong~Yang, ``Non-iterative optimization of trajectory and radio resource for aerial network,'' \emph{{IEEE} Trans. Wireless Commun.}, vol.~24, no.~2, pp. 1555--1567, 2025.

\bibitem{Boyd04-cvx}
S.~Boyd and L.~Vandenberghe, \emph{Convex Optimization}.\hskip 1em plus 0.5em minus 0.4em\relax {Cambridge University Press}, March 2004.

\bibitem{Gong11-TSP}
X.~Gong, S.~A. Vorobyov, and C.~Tellambura, ``Joint bandwidth and power allocation with admission control in wireless multi-user networks with and without relaying,'' \emph{{IEEE} Trans. Signal Processing}, vol.~59, no.~4, pp. 1801--1813, 2011.

\bibitem{weise2009-genetic}
T.~Weise, ``Global optimization algorithms-theory and application,'' \emph{Self-Published Thomas Weise}, vol. 361, 2009.

\bibitem{Schaefer2003-FOGA}
R.~Schaefer and J.~Kolodziej, ``Genetic search reinforced by the population hierarchy,'' in \emph{Foundations of Genetic Algorithms 7}.\hskip 1em plus 0.5em minus 0.4em\relax San Francisco, CA, USA: Morgan Kaufmann Publishers, 2003, pp. 383--399.

\bibitem{Ciepiela2008-ICCS}
E.~Ciepiela, J.~Kocot, L.~Siwik, and R.~Dre{\.{z}}ewski, ``Hierarchical approach to evolutionary multi-objective optimization,'' in \emph{Int. Conf. Comput. Sci. (ICCS)}, M.~Bubak, G.~D. van Albada, J.~Dongarra, and P.~M.~A. Sloot, Eds.\hskip 1em plus 0.5em minus 0.4em\relax Berlin, Heidelberg: Springer Berlin Heidelberg, 2008, pp. 740--749.

\bibitem{Cormen2009-dijkstra}
T.~H. Cormen, C.~E. Leiserson, R.~L. Rivest, and C.~Stein, ``Dijkstra's algorithm,'' in \emph{Introduction to Algorithms}, 3rd~ed.\hskip 1em plus 0.5em minus 0.4em\relax MIT Press, 2009, ch. 24.3, pp. 658--664.

\bibitem{Zou16-EveDensity}
Y.~Zou, J.~Zhu, X.~Wang, and L.~Hanzo, ``A survey on wireless security: Technical challenges, recent advances, and future trends,'' \emph{Proc. IEEE}, vol. 104, no.~9, pp. 1727--1765, 2016.

\bibitem{Liu22-Hanzo}
D.~Liu, J.~Zhang, J.~Cui, S.-X. Ng, R.~G. Maunder, and L.~Hanzo, ``Deep learning aided routing for space-air-ground integrated networks relying on real satellite, flight, and shipping data,'' \emph{{IEEE} Wireless Commun.}, vol.~29, no.~2, pp. 177--184, 2022.

\bibitem{Vondra18-IEEENetwork}
M.~Vondra, M.~Ozger, D.~Schupke, and C.~Cavdar, ``Integration of satellite and aerial communications for heterogeneous flying vehicles,'' \emph{IEEE Netw.}, vol.~32, no.~5, pp. 62--69, 2018.

\bibitem{bellemare2020autonomous}
M.~G. Bellemare, S.~Candido, P.~S. Castro, J.~Gong, M.~C. Machado, S.~Moitra, S.~S. Ponda, and Z.~Wang, ``Autonomous navigation of stratospheric balloons using reinforcement learning,'' \emph{Nature}, vol. 588, pp. 77--82, 2020.

\end{thebibliography}


\begin{thebibliography}{1}
\bibitem{3gpp.38.811}
3rd Generation Partnership Project (3GPP), \emph{Study on New Radio (NR) to support non-terrestrial networks (NTN)}, Technical Report TR 38.811, Version 15.4.0, Sep. 2020. [Online]. Available: \url{https://www.3gpp.org/DynaReport/38811.htm}
\end{thebibliography}

\newpage
\setcounter{page}{1}

\title{ {\sc \Large Supplementary Material of } \\ 
Secure Multi-Hop Relaying in Large-Scale Space-Air-Ground-Sea Integrated Networks}
\author{Hyeonsu Lyu,~\IEEEmembership{Student Member,~IEEE}, Hyeonho Noh,~\IEEEmembership{Member,~IEEE}, Hyun Jong Yang,~\IEEEmembership{Senior Member,~IEEE}, and Kaushik Chowdhury,~\IEEEmembership{Fellow,~IEEE}}
\maketitle

\begin{appendices}

{\center{ \sc \Large Contents }\\}

\hspace{12pt}

\noindent\fbox{\begin{minipage}{0.99\linewidth}
\textbf{Appendix~\ref{Supple:Notations and Variables}} Notations and Variables \dotfill \\
\textbf{Appendix~\ref{Supple:Approximation of Ergodic Spectral Efficiency}} Approximation of Ergodic Spectral Efficiency \dotfill \\
\textbf{Appendix~\ref{supple:SPSC Probability with Various Fading Effects}} SPSC Probability with Various Fading Effects \dotfill \\
\textbf{Appendix~\ref{Supple:Derivation of the SPSC Probability}} Derivation of SPSC Probability \dotfill \\
\textbf{Appendix~\ref{Supple:Derivation of the Bound Gap in the SPSC Approximation}} Approximation Gap in the SPSC Derivation \dotfill \\
\textbf{Appendix~\ref{Supple:Visualization of the MCRR Algorithm}} Visualization of the MCRR Algorithm \dotfill
\end{minipage}}

\newpage

\section{Notations and Variables}
\label{Supple:Notations and Variables}
All variables adopted in the paper can be summarized in the following table:

\begin{tablehl}[htbp]
    \centering
    \caption{Summary of Notations}
    \label{tab:Summary of Notations}
    \begin{tabularx}{\columnwidth}{>{$}l<{$} X}
        \toprule
        \textbf{Symbol} & \textbf{Description} \\
        \midrule
        \multicolumn{2}{l}{\textbf{Sets and Indices}} \\
        \mathcal{M}     & Set of indices for Eves. \\
        \mathcal{I}     & Set of indices for network nodes, $\{0, 1, ..., I\}$. \\
        \mathcal{U}     & Set of indices for users, $\{1, ..., U\}$. \\
        \mathcal{N}     & Set of nodes and users in the graph $\mathcal{G}$. \\
        \mathcal{E}     & Set of edges in the graph $\mathcal{G}$. \\
        \mathcal{E}_u   & Set of edges that constitute the relay path for user $u$. \\
        S               & A subset of nodes for the spanning tree constraint. \\
        \midrule
        \multicolumn{2}{l}{\textbf{Physical Layer Parameters}} \\
        \lambda_i       & Density of Eves in the network layer $i$. \\
        \bm{p}_i, \bm{p}_e        & Position vector of node $i$ and Eve $e$. \\
        d^{\mathsf{s}}_{(i,j)} & Distance between node $i$ and node $j$. \\
        d^{\mathsf{e}}_{i,e}   & Distance between node $i$ and Eve $e$. \\
        \alpha_i        & Path loss exponent for the link from node $i$. \\
        h^{\mathsf{s}}_{(i,j)}, h^{\mathsf{e}}_{i,e} & Small-scale Rayleigh fading channel coefficient for the legitimate link $(i,j)$ and the wiretap link $(i,e)$. \\
        \rho_i          & Transmit power density of node $i$. \\
        \sigma_i        & Jamming power density of node $i$. \\
        n_0             & Noise power spectral density. \\
        P_i^{\max},P_i^{\min}              & Max/min transmission power of a node $i$. \\
        \beta_{(i,j),u} & Bandwidth allocated to user $u$ on the link $(i,j)$. \\
        \midrule
        \multicolumn{2}{l}{\textbf{Performance Metrics}} \\
        \mathrm{SNR}^{\mathsf{s}}_{(i,j)} & SNR of the legitimate link $(i,j)$. \\
        \mathrm{SNR}^{\mathsf{e}}_{(i,e)}   & SNR of the wiretap link $(i,e)$. \\
        C_{(i,j)}       & Secrecy capacity of the link $(i,j)$. \\
        \mathbb{P}_{(i,j)} & SPSC Probability for the link $(i,j)$. \\
        \gamma_{(i,j)}  & Spectral efficiency of the link $(i,j)$. \\
        \eta_u          & Relay throughput for user $u$. \\
        h_u             & Number of hops in the relay path for user $u$. \\
        \midrule
        \multicolumn{2}{l}{\textbf{Graph and Routing Variables}} \\
        \mathcal{G}_{\mathrm{all}} & Directed network routing graph. \\
        x_{(i,j)}       & Binary variable indicating if edge $(i,j)$ is in the graph $\mathcal{G}$. \\
        x_{(i,j),u}     & Binary variable indicating if edge $(i,j)$ is in the relay path $\mathcal{E}_u$ for user $u$. \\
        \bottomrule
    \end{tabularx}
\end{tablehl}

\section{Approximation of Ergodic Spectral Efficiency}
\label{Supple:Approximation of Ergodic Spectral Efficiency}

\hl{
We approximate the ergodic spectral efficiency to capture long-term network throughput. 
By Jensen's inequality, this approximation provides an upper bound on the ergodic spectral efficiency. 
Figure~\ref{fig:ergodic_spectral_efficiency} validates the approximation by comparing the ergodic spectral efficiency with its no-fading counterpart. 
Over SNRs in the range [-40, 30]~dB, the approximation closely matches the true curve in the low-SNR regime, and the gap widens as SNR increases. 
This gap can be reduced by introducing a scaling factor $\alpha$ that multiplies~\eqref{eq:spectral_efficiency}. A least-squares fit yields $\alpha = 0.8908$, achieving a mean squared error (MSE) of $0.008$ over the specified SNR range. Hence, the approximation in~\eqref{eq:spectral_efficiency} does not materially affect the decision variables, and any residual bias can be compensated by this simple linear rescaling.}

\begin{figurehl}[htbp]
    \centering
    \includegraphics[width=\linewidth]{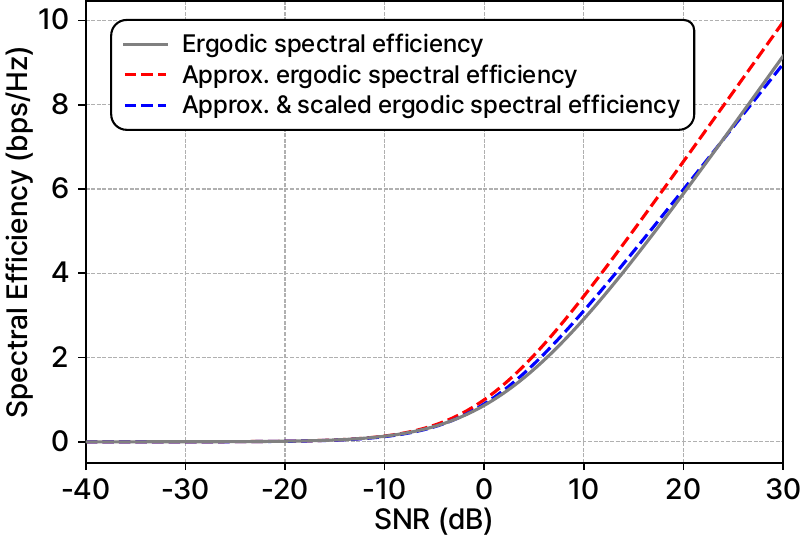}
    \caption{Comparision between ergodic spectral effieciency and its approximation}
    \label{fig:ergodic_spectral_efficiency}
\end{figurehl}

\section{SPSC Probability with Various Fading Effects}
\label{supple:SPSC Probability with Various Fading Effects}
\hlframe{
Links in SAGSINs can exhibit a strong LoS or specular component. 
Then, assuming pure Rayleigh can therefore be overly pessimistic for such links. 
) components and their slow variations are modeled using Rician and Shadowed-Rician fading, which are canonical choices in SAGSIN studies.
The channel power are normalized to ensure consistent comparisons across other fading models.
Parameters $K_{\mathrm{dB}}=8$ for a Rician fading and $m=3$ for a shadowed-Rician fading are adopted as in the 3GPP standard [1].

Figure~\ref{fig:spsc_density_rician_fading} provides the SPSC probability obtained from Monte-Carlo simulations and \eqref{eq:SPSC_final_approx_2}.
Simulation parameters are identically configured as in Sec.~\ref{subsec:SPSC Probability Analysis}.
This indicates that the Rayleigh fading provides worst-case SPSC probability since introducing LoS component helps the legitimate link more than the eavesdropper.
(Shadowed) Rician fading disproportionately benefits the short legitimate link, yielding a larger performance gain than for typically distant eavesdroppers.

Moreover, we observe that the change of fading primarily rescales the Eve density and calibration parameters in \eqref{eq:calibarated_SPSC_approximation}, but preserves the curve’s shape.
Equivalently, transitioning from Rayleigh to (shadowed) Rician fading shifts the SPSC curve rightward, effectively resembling a reduction in eavesdropper density.
Thus, Rayleigh offers a worst-case guarantee, while (shadowed) Rician represent realistic regimes. 
}

\begin{figurehl}[htb]
    \centering
    \includegraphics[width=1\linewidth]{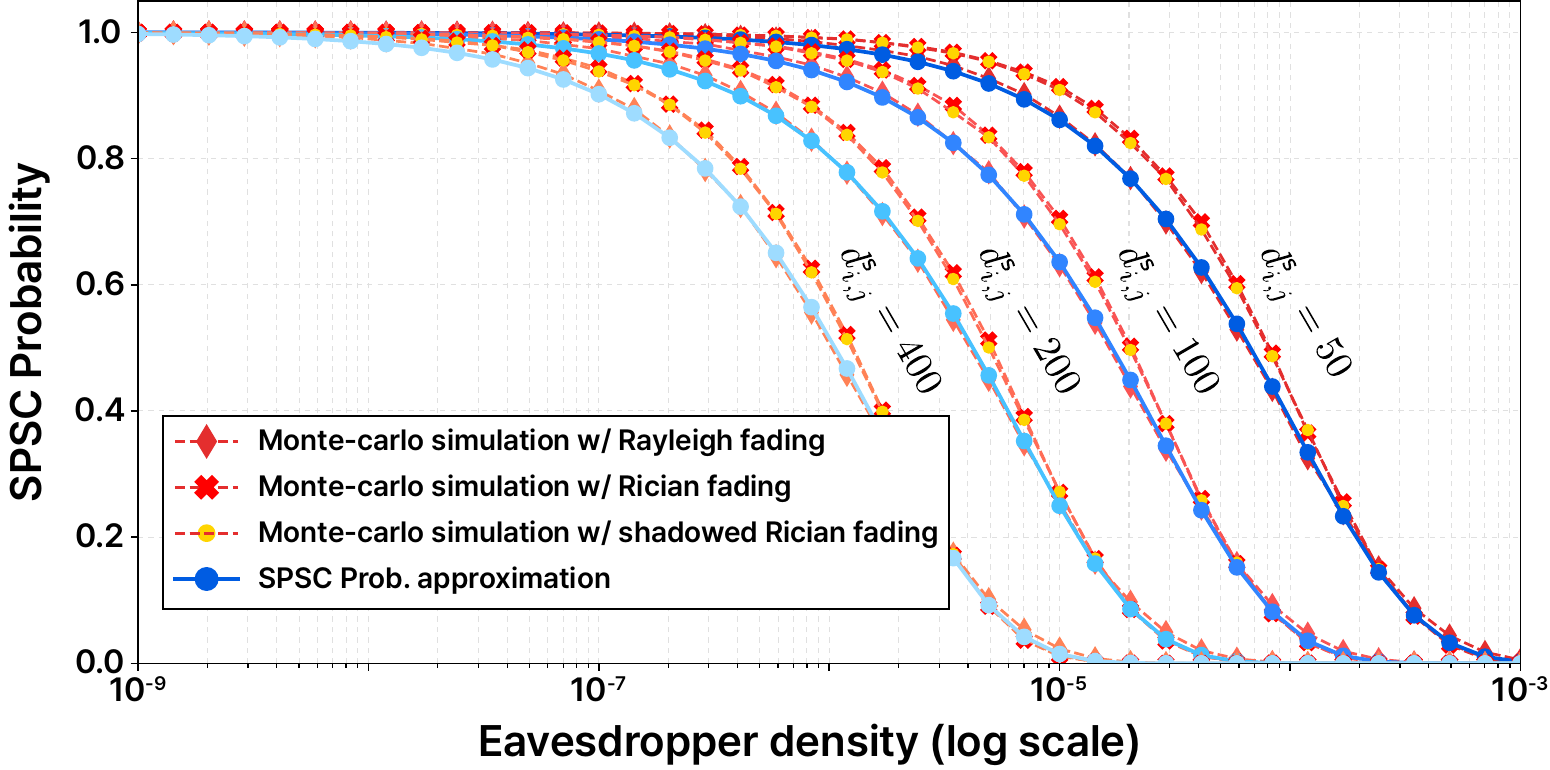}
    \caption{The SPSC probability versus Eve density for various link distances and fading effects}
    \label{fig:spsc_density_rician_fading}
\end{figurehl}

\onecolumn
\hlframe{
\section{Derivation of the SPSC Probability}
\label{Supple:Derivation of the SPSC Probability}

Using $|h^\mathsf{e}_{(i,e)}|^2 \sim \mathsf{Exp}(1)$, \eqref{eq:E1_derivation_2} can be simplified as follows:
\begin{align}
    &\!\!\mathbb{P}_{(i,j)} \hspace{-0.05cm}= \hspace{-0.05cm}\mathbb{E}_{|h|^2}\!
    \bigg[\!
        \exp\!\Big[\!
            -\!\lambda_i\hspace{-0.1cm}\int_{\mathbb{R}^2}\hspace{-0.15cm}\mathbb{P}\big(\SNR{s}_{(i,j)}<\SNR{e}_{(i,e)}\big) ~d\bm{p}_e
        \!\Big]
    \!\bigg] 
    \hspace{-.25cm}
    \\[-.1cm]
    &\!\!\! = \hspace{-0.05cm}\mathbb{E}_{|h|^2}\!
    \bigg[\!
        \exp\!\Big[\! 
    -\!\lambda_i\hspace{-0.1cm}\int_{\mathbb{R}^2}\hspace{-0.15cm}\mathbb{P}
                \bigg(
                    \frac{
                        |h^\mathsf{s}_{(i,j)}|^2 (d^\mathsf{e}_{(i,e)})^{\alpha_i} n_0 
                    }{
                        (d^\mathsf{s}_{(i,j)})^{\alpha_i}n_0  - \sigma_i G_{(i,j)}  |h^\mathsf{s}_{(i,j)}|^2
                    }
                    <  |h^\mathsf{e}_{(i,e)}|^2
                \bigg) d\bm{p}_e
        \!\Big]
    \!\bigg] 
    \\     
    \label{eq:derivation_1}
    &\!\!\! = \hspace{-0.05cm}\mathbb{E}_{|h|^2}\!
    \bigg[\!
        \exp\!\Big[\!
    -\!\lambda_i\hspace{-0.1cm}
        \underbrace{ 
        \int_{\mathbb{R}^2}\hspace{-0.15cm}
            \exp
                \bigg(
                    -\frac{
                        |h^\mathsf{s}_{(i,j)}|^2 (d^\mathsf{e}_{(i,e)} n_0)^{\alpha_i}
                    }{
                        (d^\mathsf{s}_{(i,j)})^{\alpha_i} n_0 - \sigma_i G_{(i,j)}  |h^\mathsf{s}_{(i,j)}|^2
                    }
                \bigg) d\bm{p}_e
        }_{\displaystyle \textcolor{RoyalBlue}{\textbf{(b)}}}
        \!\Big]
    \!\bigg]. 
\end{align}
Then, we can apply a polar coordinate transform to (a), which give a direct integral as
\begin{align}
    \textcolor{RoyalBlue}{\textbf{(b)}} &= \int_{\mathbb{R}^2}\hspace{-0.15cm}
            \exp
                \bigg(
                    -\frac{
                        |h^\mathsf{s}_{(i,j)}|^2 (d^\mathsf{e}_{(i,e)} n_0)^{\alpha_i}
                    }{
                        (d^\mathsf{s}_{(i,j)})^{\alpha_i} n_0 - \sigma_i G_{(i,j)}  |h^\mathsf{s}_{(i,j)}|^2
                    }
                \bigg) d\bm{p}_e \\
        &= \int_0^\infty \hspace{-0.15cm}
            r \exp
                \bigg(
                    -\frac{
                        |h^\mathsf{s}_{(i,j)}|^2 r^{\alpha_i} n_0
                    }{
                        (d^\mathsf{s}_{(i,j)})^{\alpha_i} n_0 - \sigma_i G_{(i,j)}  |h^\mathsf{s}_{(i,j)}|^2
                    }
                \bigg) dr \\
        &= \frac{2\pi}{\alpha_i}\Gamma(\frac{2}{\alpha_i})
        \bigg(
            -\frac{
                |h^\mathsf{s}_{(i,j)}|^2 n_0
            }{
                (d^\mathsf{s}_{(i,j)})^{\alpha_i} n_0 - \sigma_i G_{(i,j)}  |h^\mathsf{s}_{(i,j)}|^2
            }
        \bigg)^{-\frac{2}{\alpha_i}} \\
        &= \frac{2\pi}{\alpha_i}\Gamma(\frac{2}{\alpha_i})(d^\mathsf{s}_{(i,j)})^{2}
        \bigg(
            \frac{
                |h^\mathsf{s}_{(i,j)}|^2
            }{
                1  - \frac{\sigma_i G_{(i,j)}  }{(d^\mathsf{s}_{(i,j)})^{\alpha_i} n_0}
                |h^\mathsf{s}_{(i,j)}|^2
            }
        \bigg)^{-\frac{2}{\alpha_i}}.
        \label{eq:exact_phi_h}
\end{align}
Plugging in (a) into \eqref{eq:derivation_1} and applying Jensen's inequality gives
\begin{align}
    \mathbb{P}_{(i,j)} &\approx 
        \exp\big[
            \mathbb{E}_{|h|^2}[(a)
        ]\big] \\
    &= \exp\bigg[
            -\lambda_i \frac{2\pi}{\alpha_i} 
            \Gamma(\frac{2}{\alpha_i}) (d^\mathsf{s}_{(i,j)})^2
        \underbrace{
            \mathbb{E}_{|h|^2}
                \bigg[
                \bigg(
                    \frac{
                        |h^\mathsf{s}_{(i,j)}|^2
                    }{
                        1  - 
                        \frac{\sigma_i G_{(i,j)}}{(d^\mathsf{s}_{(i,j)})^{\alpha_i} n_0}
                        |h^\mathsf{s}_{(i,j)}|^2
                    }
                \bigg)^{-\frac{2}{\alpha_i}}
            }_{\displaystyle \textcolor{RoyalBlue}{\textbf{(c)}}}
        \bigg]
\end{align}

Finally, applying the binomial approximation on the denominator of \textbf{(b)} results in
\begin{align}
    \textcolor{RoyalBlue}{\textbf{(c)}} &= \mathbb{E}_{|h|^2}
            \bigg[
                 |h^\mathsf{s}_{(i,j)}|^{-\frac{4}{\alpha_i}}
                 \Big(
                    1  - \frac{\sigma_i G_{(i,j)}}{(d^\mathsf{s}_{(i,j)})^{\alpha_i} n_0}
                    |h^\mathsf{s}_{(i,j)}|^2
                 \Big)^{\frac{2}{\alpha_i}}
            \bigg]
    \nonumber\\
    &\approx  \mathbb{E}_{|h|^2}
            \bigg[
                |h^\mathsf{s}_{(i,j)}|^{-\frac{4}{\alpha_i}}
                \Big(
                    1  - \frac{2 \sigma_i G_{(i,j)}}{\alpha_i (d^\mathsf{s}_{(i,j)})^{\alpha_i} n_0}
                    |h^\mathsf{s}_{(i,j)}|^2
                \Big)
            \bigg]
    \nonumber\\
    &= \mathbb{E}_{|h|^2}
        \big[
            |h^\mathsf{s}_{(i,j)}|^{-\frac{4}{\alpha_i}}
        \big]
        -
        \frac{2 \sigma_i G_{(i,j)}}
             {\alpha_i (d^\mathsf{s}_{(i,j)})^{\alpha_i} n_0}
        \mathbb{E}_{|h|^2}
        \Big[
            |h^\mathsf{s}_{(i,j)}|^{2-\frac{4}{\alpha_i}}
        \Big]
    \nonumber\\
    &= \Gamma \big(1-\frac{2}{\alpha_i} \big) 
    -
    \frac{2 \sigma_i G_{(i,j)}}
             {\alpha_i (d^\mathsf{s}_{(i,j)})^{\alpha_i} n_0}
    \Gamma \big(2-\frac{2}{\alpha_i} \big)
    \label{eq:approximiation_b}
\end{align}
Then, the SPSC probabilty can be approximately derived as
\begin{align}
    \mathbb{P}_{(i,j)}=
    \exp
    \bigg[
        -\kappa_i
        \Big[
        \Gamma \big(1-\frac{2}{\alpha_i} \big) 
        -
        \frac{2 \sigma_i G_{(i,j)}}
                 {\alpha_i (d^\mathsf{s}_{(i,j)})^{\alpha_i} n_0}
        \Gamma \big(2-\frac{2}{\alpha_i} \big)
        \Big]
        (d^\mathsf{s}_{(i,j)})^2
    \bigg]
\end{align}
for $\kappa_i=\lambda_i \frac{2\pi}{\alpha_i}\Gamma(\frac{2}{\alpha_i})$.
When $\sigma_i=0$, this approximation reduces to the previously reported result in \cite{Yao16-TCOM}, demonstrating both the validity and extensibility of the derivation.

\section{Derivation of the Bound Gap in the SPSC Approximation}
\label{Supple:Derivation of the Bound Gap in the SPSC Approximation}

\subsection{High Pathloss Exponent Case}
For brevity of notations, we define 
\begin{align}
\Phi(|h|^2) 
&:= 
\frac{2\pi}{\alpha_i}\Gamma(\frac{2}{\alpha_i})(d^\mathsf{s}_{(i,j)})^{2}
        \bigg(
            \frac{
                |h|^2
            }{
                1  - \frac{\sigma_i G_{(i,j)}} {(d^\mathsf{s}_{(i,j)})^{\alpha_i} n_0}
                |h|^2
            }
        \bigg)^{-\frac{2}{\alpha_i}}
\end{align}
which is a non-negative random variable depending on the fading coefficient $|h|^2 \sim \mathsf{exp}(1)$.
With this notation, the original success probability is written as
\begin{align}
\mathbb{P}_{(i,j)}
&= \mathbb{E}_{|h|^2}\left[\exp(-\lambda_i \Phi(h))\right]. \label{eq:main_def}
\end{align}
Since the exponential function is convex, Jensen's inequality yields
\begin{align}
\mathbb{E}_{|h|^2}\big[\exp(-\lambda_i \Phi(|h|^2))\big]
\geq \tilde{\mathbb{P}}_{(i,j)} := \exp\left(-\lambda_i \mu \right),
\label{eq:LB}
\end{align}
for $\mu := \mathbb{E}_{|h|^2}[\Phi(|h|^2)]$.

Define the Jensen gap as
\begin{align}
\Delta := \mathbb{P}_{(i,j)} - \tilde{\mathbb{P}}_{(i,j)} \geq 0.
\end{align}
Let $\varphi(x)=e^{-\lambda_i x}$, which is convex on $[0,\infty)$ with
$\varphi''(x)=\lambda_i^2 e^{-\lambda_i x}\le \lambda_i^2$.
A standard bound on Jensen gaps for convex functions yields
\begin{align}
0 &\le \Delta
= \mathbb{E}_h[\varphi(\Phi)]-\varphi(\mu)
\\
& \le \frac{1}{2}\sup_{x\ge 0}\varphi''(x)\mathrm{Var}(\Phi)
\le \frac{\lambda_i^2}{2}\mathrm{Var}(\Phi).
\label{eq:delta_bound}
\end{align}
The bound in \eqref{eq:delta_bound} holds whenever $\mathrm{Var}(\Phi)<\infty$.

For any $m$ with $\frac{2m}{\alpha_i}<1$, the $m$-th moment exists and equals
\begin{align}
\mathbb{E}\big[\Phi^m\big]
    &= 
    \bigg(
    \frac{2\pi}{\alpha_i}\Gamma(\frac{2}{\alpha_i})(d^\mathsf{s}_{(i,j)})^{2}
    \bigg)^m
        \bigg(
            \frac{
                |h|^2
            }{
                1  - \frac{\sigma_i G_{(i,j)}} {(d^\mathsf{s}_{(i,j)})^{\alpha_i} n_0}
                |h|^2
            }
        \bigg)^{-\frac{2m}{\alpha_i}} \\
    &= \bigg(
    \frac{2\pi}{\alpha_i}\Gamma(\frac{2}{\alpha_i})(d^\mathsf{s}_{(i,j)})^{2}
    \bigg)^m
    \bigg[
    \Gamma \big(1-\frac{2m}{\alpha_i} \big) 
    -
    \frac{2m \sigma_i G_{(i,j)}}
             {\alpha_i (d^\mathsf{s}_{(i,j)})^{\alpha_i} n_0}
    \Gamma \big(2-\frac{2m}{\alpha_i} \big)
    \bigg]
\end{align}
Then, we have
\begin{align}
&\mu=
    \frac{2\pi}{\alpha_i}\Gamma(\frac{2}{\alpha_i})(d^\mathsf{s}_{(i,j)})^{2}
    \bigg[
    \Gamma \big(1-\frac{2}{\alpha_i} \big) 
    -
    \frac{2 \sigma_i G_{(i,j)}}
             {\alpha_i (d^\mathsf{s}_{(i,j)})^{\alpha_i} n_0}
    \Gamma \big(2-\frac{2}{\alpha_i} \big)
    \bigg]
\\
&\mathrm{Var}(\Phi)
= \bigg(
    \frac{2\pi}{\alpha_i}\Gamma(\frac{2}{\alpha_i})(d^\mathsf{s}_{(i,j)})^{2}
    \bigg)^2
    \bigg[
    \Gamma \big(1-\frac{4}{\alpha_i} \big) 
    -
    \frac{4 \sigma_i G_{(i,j)}}
             {\alpha_i (d^\mathsf{s}_{(i,j)})^{\alpha_i} n_0}
    \Gamma \big(2-\frac{4}{\alpha_i} \big)
    \bigg].
\label{eq:Phi_mean_variance}
\end{align}
Plugging \eqref{eq:Phi_mean_variance} in \eqref{eq:delta_bound} provides a tight Jensen bound when $\alpha_i > 4$.

\subsection{Low Pathloss Exponent Case}
When $\alpha_i < 4$, the variance of $\Phi$ diverges since the integral \eqref{eq:exact_phi_h} assumes that Eves are distributed over an infinitely large space.
Instead, we restrict the integration domain to $(0,R)$ to compute a more accurate gap.
As $\Delta$ increases with $\alpha_i$, \eqref{eq:delta_bound} is bounded by $\mathrm{Var}(\Phi)$ at $\alpha_i=4$.

Then, for $C=\Big[(d^\mathsf{s}_{(i,j)})^{\alpha_i} n_0 - \sigma_i G_{(i,j)}  |h^\mathsf{s}_{(i,j)}|^2 \Big]^{-1}$,
we have
\begin{align}
\mathbb{E}_{|h|^2}[\Phi]
&= 2\pi \int_0^R r \mathbb{E}_{|h|^2}\big[\exp\left(
-C|h|^2 r^4
\right)\big]dr \\
&= \pi \int_0^R \frac{r}{1+Cr^4} dr\\
&= \frac{\pi}{\sqrt{C}}\arctan(R^2\sqrt{C}) \approx \frac{\pi^2}{2\sqrt{C}}
\end{align}
for sufficiently large $R$.

Also, for variance, we have
\begin{align}
    \mathbb{E}_{|h|^2}[\Phi^2]
    &=\mathbb{E}_{|h|^2}\left[
    \Big(
     2\pi \int_0^R r \exp\left(
-C|h|^2 r^4
\right)dr
    \Big)^2
    \right] 
    \\
    &=\pi^2
    \int_0^{R^2}\!\!\!\int_0^{R^2}
    \mathbb{E}_{|h|^2}\Big[
    \exp\left(
-C|h|^2 (u^2+s^2)
\right)\Big]duds \\
    &=\pi^2
    \int_0^{R^2}\!\!\!\int_0^{R^2}
    \frac{duds}
    {1+C(r^4_1+r^4_2)} \\
    &\leq
    \pi^2\int_0^{2\pi}\!\!\!\int_0^{\sqrt{2}R^2}
    \frac{r}
    {1+Cr^2}drd\theta \\
    &=\frac{\pi^3}{4C}\ln(1+2CR^4)
    \approx \frac{\pi^2}{C}\ln{R}.
\end{align}
Thus, we have 
\begin{align}
    \mathrm{Var}(\Phi) = 
    \frac{\pi^3}{C}\ln{R} - \frac{\pi^4}{4C}.
    \label{eq:Phi_approx_variance}
\end{align}
Then, plugging \eqref{eq:Phi_approx_variance} in \eqref{eq:delta_bound} provides an asymptotic bound for $\alpha_i < 4$.
}

\clearpage
\section{Visualization of the MCRR Algorithm}
\label{Supple:Visualization of the MCRR Algorithm}
\begin{figurehl}[htb]
    \centering
    \includegraphics[width=0.96\linewidth]{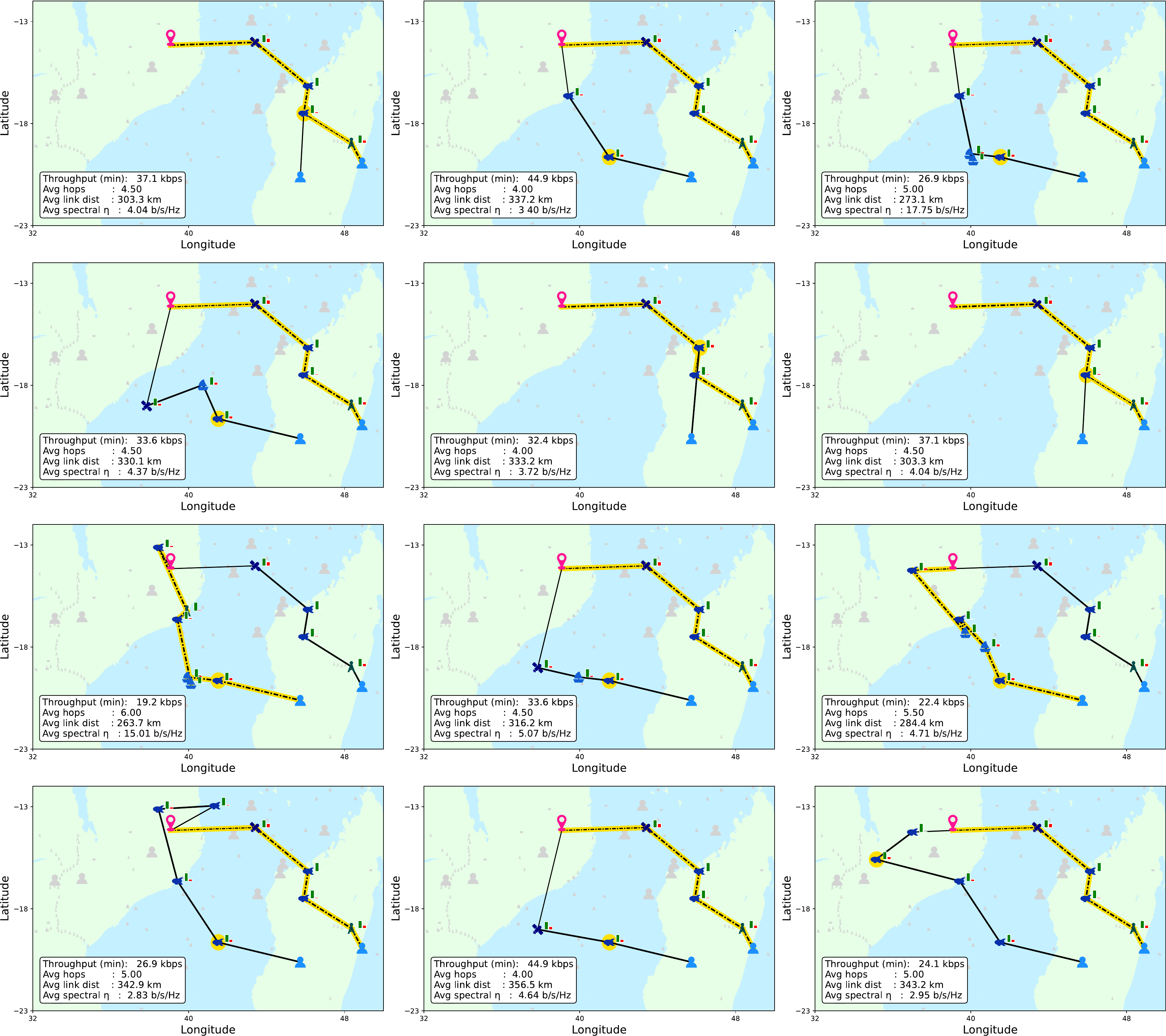}
    \caption{Visualization of twelve biased random walk samples of MCRR algorithm implemented in Mozambique-Madagascar testbed.
    Among them, a path with the largest min throughput is selected for the user.
    The highlighted dashed line represents the longest path. The highlighted node represents the min-throughput node.}
    \label{fig:mcrr_candidate_path_visualization}
\end{figurehl}

\begin{figurehl}[htb]
    \centering
    \includegraphics[width=0.96\linewidth]{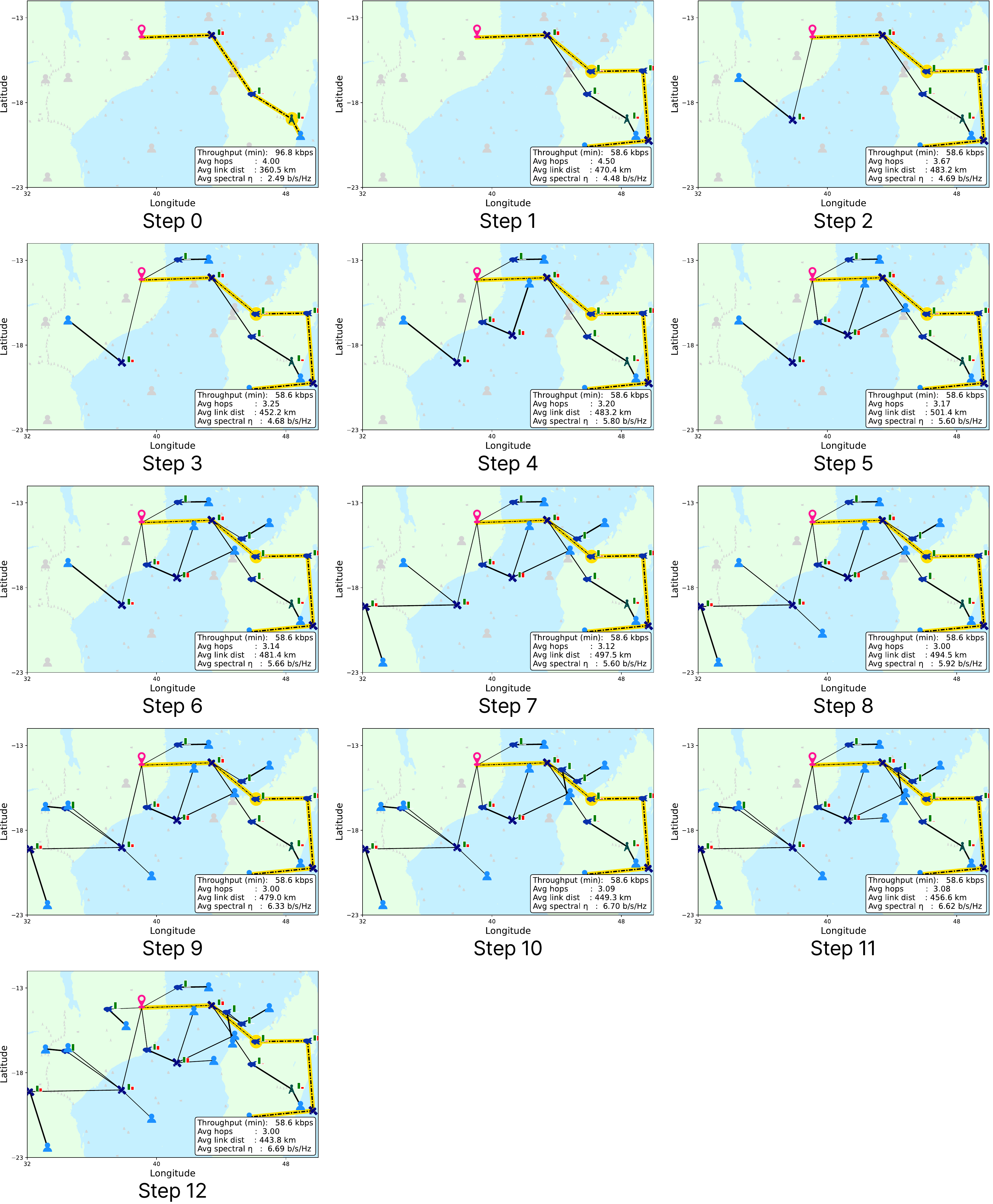}
    \caption{Step-by-step visualization of the MCRR algorithm implemented in Mozambique-Madagascar testbed. The highlighted dashed line represents the longest path. The highlighted node represents the min-throughput node.}
    \label{fig:mcrr_visualization}
\end{figurehl}

\end{appendices}
\end{document}